\documentclass[aps,showpacs,preprintnumbers,amsmath,amssymb,twocolumn,superscriptaddress,floatfix,nofootinbib,10pt,pre]{revtex4-1}

\usepackage{amssymb, amsmath, amsfonts}
\usepackage{amsthm, mathrsfs, amsopn}
\usepackage{bm,bbm}
\usepackage{dcolumn}
\usepackage{graphicx}
\usepackage{mathtools}
\usepackage{subfigure}
\usepackage{tabularx}
\usepackage{tikz}
\usepackage{xkcdcolors}
\usepackage{hyperref}
\usepackage{amsthm}
\hypersetup{colorlinks=true, citecolor=orange, urlcolor=blue, linkcolor=magenta}

\newtheorem{theorem}{Theorem}
\newtheorem{lemma}{Lemma}
\begin{document}

\title{Global synchronization in time-varying directed higher-order networks}

\author{Wilfried Segnou}
\email{wilfried.segnou@unamur.be}

\affiliation{Department of Mathematics and Namur Institute for Complex Systems, naXys, University of Namur, Belgium}

\affiliation{Physics and Applied Mathematics Unit, Indian Statistical Institute, 203 B. T. Road, Kolkata 700108, India}

\author{Timoteo Carletti}
\email{timoteo.carletti@unamur.be}
\affiliation{Department of Mathematics and Namur Institute for Complex Systems, naXys, University of Namur, Belgium}

\author{Dibakar Ghosh}
\email{diba.ghosh@gmail.com}
\affiliation{Physics and Applied Mathematics Unit, Indian Statistical Institute, 203 B. T. Road, Kolkata 700108, India}

\begin{abstract}
Global synchronization has been largely investigated in static higher-order networks and little attention has been brought to directed, temporal higher-order networks. We hereby make a step in this direction by studying the emergence of global synchronization in periodically fast switching systems with directed higher-order interactions. By working in the framework of the Master Stability Function, we show that the linearized dynamics includes two time-dependent Jacobian matrices, both evolving on the timescale of the synchronous solution, together with a generalized Laplacian evolving on a fast switching timescale, resulting from the hypergraph time evolution. To address this two-timescale structure, we establish a fast-switching stability criterion for directed balanced higher-order structures. The latter condition allows to decouple the parallel and transverse perturbations to the synchronization manifold. By relying on the average theory for dynamical systems, we are able to prove exponential stability of global synchronization in the fast switching higher-order network. The proposed theory is then illustrated by using a higher-order Stuart--Landau model, which provides a natural benchmark for the theory. Directionality can move Laplacian eigenvalues into instability regions off the real axis, hence destabilizing synchronization; our results nevertheless reveal that temporal switching can counteract this destabilizing effect and restore global synchronization when the directed structures alternate sufficiently rapidly, even though the instantaneous directed higher-order structures are individually unstable.
\end{abstract}

\maketitle

\section{Introduction}
\label{sec:Introduction}

Network science provides a unified framework for modeling complex systems, the latter being composed of many interacting units and used to study many systems, ranging from biological and ecological populations to social, technological, and engineered systems~\cite{barabasi2003scale,albert2002statistical,newman2018networks,lietz2018watts,strogatz2001exploring}. By representing individual units as nodes and their interactions as edges, network models help explain how local interactions generate global behaviors that cannot be inferred from the knowledge of the isolated units alone. Among these collective phenomena, synchronization has received particular attention in the field of networked dynamical systems. In a broad sense, synchronization describes the spontaneous emergence of coordinated dynamics among interacting units, despite differences in their initial states or individual evolution~\cite{becchetti2020consensus,dolfin2015modeling,valentini2017achieving}.

Of particular interest is global synchronization, in which identical nodes asymptotically converge towards a common trajectory; let us observe that ``\textit{global}'' refers to synchronization across the entire network and does not imply global asymptotic stability, for this reason the latter is also named complete synchronization in the literature. It is very often studied in the case the single units are, regular or irregular, oscillators and it is used to describe numerous natural and technological systems, by including the coherent flashing of fireflies
~\cite{buck1976synchronous,buck1988synchronous,sarfati2021self}, neuronal and cardiac activity~\cite{sato2009synchronization,winfree1988time}, chemical oscillations
~\cite{toiya2010synchronization,taylor2009dynamical}, coupled laser arrays
~\cite{terry1999synchronization,deshazer2001detecting}, power grids~\cite{motter2013spontaneous,rohden2012self}, and wind turbines~\cite{kouonang2026synchronization,van2024synchronized}.
For this reason, scholars have developed several tools among which the Master Stability Function (MSF)~\cite{fujisaka1983stability,pecora1998master} has played a central role. The latter relates the linear stability of the synchronous state to the local dynamics, the coupling function and to the spectrum of the network Laplacian, hence synchronization strongly depends on structural properties such as connectivity, heterogeneity, modularity, and
directionality~\cite{arenas2008synchronization,townsend2020dense}.

The dependence of synchronization on connectivity naturally raises
the question of how collective coherence is affected when the
interaction structure itself evolves in time. Several extensions of
the MSF have therefore been developed to
account for temporal variations of the Laplacian spectrum and its
eigenvectors~\cite{carletti2022theory,zhang2021designing}. Nevertheless, analytically predicting stability of global synchronization in a general time-varying network remains a difficult task,
since the complete temporal evolution of both the eigenvalues and the
associated eigenbasis may be required. We refer the interested reader to the review work~\cite{ghosh2022synchronized} for more details about synchronization in temporal networks. To overcome the analytical difficulties and because of their technological applications to switched systems~\cite{roberson2005control,bhowmick2012experimental,porfiri2009global}, scholars have studied the case where the coupling network periodically switches between a finite number of configurations, a priori known. When the switching is sufficiently fast, averaging theory can be invoked to relate the temporal dynamics to the one generated by an effective averaged structure
~\cite{maggia2020higher,sanders2007averaging,petit2017theory}.
However, closeness between the trajectories of the temporal and
averaged systems does not by itself guarantee the asymptotic stability
of the synchronous solution even if the latter exhibits synchronization. 

Over the past three decades, increasing attention has been devoted to the stability of slow-fast linear systems. Rigorous averaging methods have therefore been developed to relate the asymptotic stability of such systems to that of a reduced averaged model, obtained by averaging over the fast temporal scale~\cite{choi2005stability,liberzon2022stability,shim2025stability}. In the context of network dynamics, a stronger result was established in~\cite{stilwell2006sufficient}, showing that exponential stability of the synchronization manifold of the averaged network can be transferred to the original time-varying network provided that the switching is sufficiently fast. In other words, if perturbations transverse to the synchronization manifold decay in the averaged network, then they also decay in the rapidly switching network, so that the synchronous state remains stable below a sufficiently small switching threshold. To the best of our knowledge, the theory has been developed by assuming a linear coupling, reducing thus the linearized system to a sum of two terms, the slow time-varying Jacobian of the ``reaction'' part, and the fast time-varying Laplace matrix. By considering a nonlinear coupling, we will obtain a more complicated structure exhibiting the product of slow and fast matrices. This is the more general framework we will consider in the following.

A large amount of literature about synchronization of coupled oscillators, assumed that interactions occur exclusively between pairs of nodes, whereas many collective
processes involve the simultaneous participation of several units, i.e., many-body interactions. Recently, researchers have shown that  hypergraphs provide a natural representation of the latter by allowing a single hyperedge to connect more than two
nodes~\cite{battiston2021physics,boccaletti2023structure,
zhang2024deeper}. The dynamics associated with these structures naturally involves
multivariate nonlinear coupling functions, whose arguments represent the states
of the different nodes participating in the same hyperedge. Directionality further distinguishes nodes that transmit information from those receiving it, thereby
describing asymmetric group interactions that cannot be
represented by undirected hypergraphs
~\cite{gallo2022synchronization,dorchain2024impact,li2024enhancing}. By exploiting their symmetry properties, scholars have been able to highlight their impact on problems related to the stability of global synchronization~\cite{lucas2020multiorder,gallo2022synchronization,gambuzza2021synchronization}. Although the vast majority of previous studies have focused on static underlying higher-order structures, some works have extended synchronization analysis to fast-switching higher-order networks~\cite{anwar2023synchronization,rakshit2020intralayer}. While important analytical and numerical results have been obtained, a general rigorous proof establishing the asymptotic stability of synchronization manifold in fast switching higher-order structures remains, to the best of our knowledge, unavailable. 

In fast switching directed hypergraphs, one can again resort to a linear stability analysis about the synchronous trajectory, facing thus to a linearized system involving a time-dependent generalized Laplacian, inherited from the switching hypergraph, and time-dependent local and coupling Jacobian matrices evaluated along the synchronous solution. For the sake of pedagogy and to make a step forward, we hereby assume the temporal directed hypergraph to undergo fast periodic switching among a finite set of fixed configurations, allowing us to easily compute the average of the time-dependent generalized Laplacian over one switching period. We are then able to establish a stability criterion that reduces the analysis of the linearized system involving products and sum of terms evolving on two distinct time scales, to the one of an averaged system depending only on the slow time scale, provided the switching frequency of the directed hypergraph is sufficiently large. In order to isolate perturbations transverse to the synchronization manifold in both the averaged and the temporal linearized systems, we assume the directed hypergraph to be balanced at all times.

For the sake of pedagogy, we introduce a higher-order Stuart-Landau (SL) model to benchmark the proposed theory. The SL oscillator results to be the normal form of any dynamical system close to a supercritical Hopf bifurcation and provides a generic amplitude--phase description of self-sustained oscillations
~\cite{kuramoto1984chemical,garcia2012complex,nakao2014complex,van2021theory,di2017benjamin}; for this reason the results hereby presented still hold true beyond the actual example. In contrast to
phase-only descriptions, SL retains amplitude variations while its
limit-cycle solution remains analytically accessible, making the
model particularly suitable for an analytical investigation of
synchronization beyond purely phase dynamics; indeed the Jacobian associated to the local ``reaction'' part becomes time independent, even if evaluated on the synchronous solution. To further simplify the analysis, we assume to deal with a nonlinear multivariate resonant 
coupling~\cite{rspa.2025.0907}. This condition removes the explicit
time dependence of the coupling Jacobian so that
the temporal dependence of the linearized dynamics is entirely carried
by the generalized fast switching Laplacian. Furthermore we show that, when the resonance condition is not satisfied, we can resort to the use of Floquet theory to study the stability of the synchronous solution.

Our analysis reveals a nontrivial interplay between directionality and
temporality. By using the MSF, we show that directionality can shift Laplacian eigenvalues into instability regions, destabilizing synchronization otherwise stable on the corresponding undirected structure. Temporal switching can counteract this effect: if the averaged transverse dynamics is stable, sufficiently rapid alternation between individually destabilizing directed configurations restores local stability of global synchronization. We illustrate this mechanism by using directed circulant hyperrings, whose explicit spectra allow an analytical comparison of the instantaneous and averaged structures.


\section{The general framework}
\label{sec:problem}

Let us consider a population of $N$ identical dynamical systems whose dynamics, once isolated each other, is governed by
\begin{equation}
\label{eq:isolated_general_system}
\frac{d{\bm x}_{\ell}}{dt}
=
\mathbf{f}({\bm x}_{\ell})\, ,
\quad
\ell=1,\ldots,N,
\end{equation}
where ${\bm x}_{\ell}\in\mathbb{R}^{n}$ denotes the state of node
$\ell$ and
$\mathbf{f}:\mathbb{R}^{n}\rightarrow\mathbb{R}^{n}$
is a nonlinear vector field. We assume that units interact each other through directed higher-order structures whose connectivity  evolves in time. To introduce directionality in higher-order interactions, let us consider $d+1$ distinct nodes belonging to a given hyperedge and divide them into two disjoint groups, one made by $m$ nodes and the second one by $q=d+1-m$. The latter nodes form the \emph{tail} and are characterized by the fact that they jointly transmit information only towards nodes in the remaining group, the \emph{head};  nodes in the head interact each other but not with those in the tail. This defines an $m$-directed $d$-hyperedge.  Hence, the interaction defines a preferred direction from the tail to the head~\cite{gallo2022synchronization,dorchain2024impact}. Such structure can be represented by an adjacency tensor
$\mathbf{A}^{(d,m)}$, where the indexes denote the order, $d$, and
directionality, $m$; if the nodes $\ell, i_1,\cdots, i_{m-1}$ are in the head and $j_1,\cdots, j_q$ in the tail, we have $\mathbf{A}^{(d,m)}_
{(\ell i_1\cdots i_{m-1})(j_1\cdots j_q)}=1$. By definition nodes in the head interact each other and thus they are indistinguishable each other, the same holds true for nodes in the tail, hence the tensor is invariant under permutations performed separately within these two groups, namely
\begin{equation}
\mathbf{A}^{(d,m)}_
{\pi_1(\ell\,i_1,\ldots,i_{m-1})
\pi_2(j_1,\ldots,j_q)} =
\mathbf{A}^{(d,m)}_
{(\ell i_1\cdots i_{m-1})(j_1\cdots j_q)}\, ,
\end{equation}
for every admissible permutation $\pi_1$ of the head indices and
$\pi_2$ of the tail indices. Let us observe that once the hypergraph varies in time we will assume the latter equation to be valid for all $t>t_0=0$ and we will denote the tensor by $\mathbf{A}^{(d,m)}_{\pi_1(\ell\,i_1,\ldots,i_{m-1})\pi_2(j_1,\ldots,j_q)} (t)$.

Although this construction accommodates arbitrary values of $m$, in
the remainder of this work we restrict ourselves to the case
$m=1$. Thus, every directed hyperedge contains a single node in its
head and $d$ nodes in its tail, as we will see later this simplifies the analysis. Furthermore, since no ambiguity will arise, we shall
henceforth refer to such interactions simply as \emph{directed}
$d$-hyperedges rather than $1$-directed $d$-hyperedges. The adjacency tensor is therefore denoted simply by $\mathbf{A}^{(d)}$, with entries $\mathbf{A}^{(d)}_{\ell j_1\cdots j_d}=p$
whenever $\ell$ belongs to the head of the directed hyperedge,
$j_1,\ldots,j_d$ belong to its tail, and $p>0$ is the corresponding
interaction weight. The pairwise directed network case is recovered for $d=1$. 

For the sake of pedagogy and to simplify a bit more the analysis, we focus on $d$-uniform directed hypergraphs, so that all interactions involve exactly $d+1$ nodes at any given time. The dynamics of node $\ell$ can then be written in the general form by
\begin{equation*}
\frac{d{\bm x}_{\ell}}{dt}
=
\mathbf{f}({\bm x}_{\ell})
+
\mu
\sum_{j_1,\ldots,j_d=1}^{N}
\mathbf{A}^{(d)}_{\ell j_1\cdots j_d}(t)
\,
\mathbf{g}^{(d)}
\left(
{\bm x}_{\ell},
{\bm x}_{j_1},\ldots,{\bm x}_{j_d}
\right)\, ,
\end{equation*}
where $\mathbf{g}^{(d)}:\mathbb{R}^{n(d+1)}\rightarrow\mathbb{R}^{n}$ describes the nonlinear higher-order interaction and $\mu$ denotes its coupling strength. In this work we are interested in hypergraphs whose time evolution is $T$-periodic, for some $T>0$, i.e., $\mathbf{A}^{(d)}_{\ell j_1\cdots j_d}(t)=\mathbf{A}^{(d)}_{\ell j_1\cdots j_d}(t+T)$ for all $t$ and all the indexes. Moreover we assume the time evolution of the hypergraph structure to be faster than the natural time scale of the isolated system~\eqref{eq:isolated_general_system}, this assumption can be encoded into the following equation
\begin{equation}
\label{eq:general_problem_model}
\frac{d{\bm x}_{\ell}}{dt}
=
\mathbf{f}({\bm x}_{\ell})
+
\mu
\sum_{j_1,\ldots,j_d=1}^{N}
\mathbf{A}^{(d)}_{\ell j_1\cdots j_d}(t/\varepsilon)
\,
\mathbf{g}^{(d)}
\left(
{\bm x}_{\ell},
{\bm x}_{j_1},\ldots,{\bm x}_{j_d}
\right)\,,
\end{equation}
where
$\varepsilon>0$ controls the timescale of the structural evolution. In particular, small values of
$\varepsilon$ correspond to a rapidly varying hypergraph. 

Before to analyse the system~\eqref{eq:general_problem_model} in quest of possible global synchronization we need to introduce the notion of {\em balanced directed uniform hypergraphs}. Let us consider the generalized in-degree
\begin{equation}
\label{eq:genindeg}
 k_\ell^{\mathrm{in}}
=
\frac{1}{d!}
\sum_{j_1,\ldots,j_d=1}^{N}
\mathbf{A}^{(d)}_{\ell j_1\cdots j_d}\, ,
\end{equation}
the latter  denoting the total weight of hyperedges containing node $\ell$ in their head. We also consider the total weight of directed hyperedges containing node $\ell$ in the head and node $j$ in the tail
\begin{equation}
\label{eqkinellj}
 k_{\ell j}
=
\frac{1}{(d-1)!}
\sum_{j_2,\ldots,j_d=1}^{N}
\mathbf{A}^{(d)}_{\ell jj_2\cdots j_d}\, .
\end{equation}
With those definitions we can write the generalized Laplacian matrix~\cite{gallo2022synchronization,dorchain2024impact} as
\begin{equation}
\label{eq:Lapd}
    \mathbf{L}^{(d)}_{\ell j}
=
\begin{cases}
-d!\;k_\ell^{\mathrm{in}},
& \ell=j,\\[4mm]
(d-1)!\;k_{\ell j},
& \ell \ne j\, .
\end{cases}
\end{equation}
Let us observe that the latter matrix has been called $\check{\;\mathbf{L}}^{(d,1)}$ in~\cite{dorchain2024impact}, however to lighten the notations we here prefer to use a simplified notation. Finally, we define the out-degree as the total weight of hyperedges that have node $\ell$ in tail 
\begin{equation}
\label{eq:kout}
 k_\ell^{\mathrm{out}}
=
\frac{1}{(d-1)!}
\sum_{j_1,j_2,\ldots,j_d=1}^{N}
\mathbf{A}^{(d)}_{j_1 \ell j_2\cdots j_d}\, .
\end{equation}

A directed uniform hypergraph is said to be \textit{balanced} if, for each node $\ell$, the total weight of the interaction it determines as tail node, is equal to the total weight of the interaction it receives as a head. Since each directed $d-$hyperedge involves $d$ tail nodes jointly acting on its head, balance condition reads 
\begin{equation}
\label{eq:balanced}
    k_\ell^{\mathrm{out}} = d\;k_\ell^{\mathrm{in}}, \quad \ell=1,\dots,N\, .
\end{equation}

Based on this definition, the matrix $\mathbf{L}^{(d)}$ is balanced, i.e., the outgoing degree equals the incoming one. This implies that the vector $\mathbf{u}=(1,\dots,1)^\top/\sqrt{N}$ lies in both the kernel of the matrix $\mathbf{L}^{(d)}$ and its transpose

\begin{equation}
    \mathbf{L}^{(d)}\mathbf{u}=\mathbf{u}^\top\mathbf{L}^{(d)}=0\, .
\end{equation}

Our interest is the emergence and stability of global synchronization. Let ${\bm x}_s(t)$ be a solution of the isolated dynamics Eq.~\eqref{eq:isolated_general_system}, then global synchronization corresponds to the existence and stability of the invariant manifold
\begin{equation}
\label{eq:general_sync_manifold}
{\bm x}_1(t)= {\bm x}_2(t)=\cdots= {\bm x}_N(t)= {\bm x}_s(t)\, .
\end{equation}
For this manifold to be a solution of the coupled problem Eq.~\eqref{eq:general_problem_model},  the interaction term must vanish whenever all its arguments coincide. A
natural way to guarantee this property is to assume a \textit{diffusive-like}
higher-order coupling ~\cite{gambuzza2021synchronization,gallo2022synchronization,lucas2020multiorder}, namely
\begin{equation}
\label{eq:general_diffusive_coupling}
\begin{aligned}
\mathbf{g}^{(d)}
\left(
{\bm x}_{\ell},
{\bm x}_{j_1},\ldots,{\bm x}_{j_d}
\right)
&=
\mathbf{h}^{(d)}
\left(
{\bm x}_{j_1},\ldots,{\bm x}_{j_d}
\right)+
\\
&-
\mathbf{h}^{(d)}
\left(
{\bm x}_{\ell},\ldots,{\bm x}_{\ell}
\right)\, ,
\end{aligned}
\end{equation}
where
$\mathbf{h}^{(d)}:\mathbb{R}^{nd}\rightarrow\mathbb{R}^{n}$.
Consequently, $
\mathbf{g}^{(d)}
(\mathbf{x},\ldots,\mathbf{x})
=
\mathbf{0},
$ 
and the synchronous trajectory of the isolated system is also a solution of the coupled dynamics, independently of the instantaneous hypergraph configuration. 

Let us investigate the linear stability of the synchronous solution. By
introducing small heterogeneous perturbations around the synchronous
trajectory
\begin{equation*}
 {\bm x}_{\ell}(t)
=
{\bm x}_{s}(t)
+
\delta{\bm x}_{\ell}(t),
\text{ with }
\|\delta{\bm x}_{\ell}\|\ll 1,
\end{equation*}
the linearized dynamics yields~\cite{gambuzza2021synchronization,gallo2022synchronization,
dorchain2024impact,lucas2020multiorder}
\begin{equation}
\label{eq:general_linear_node}
\frac{d}{dt}\delta{\bm x}_{\ell}
=
\mathbf{J}_0({\bm x}_{s}(t))\,
\delta{\bm x}_{\ell}
+
\mu\,
\sum_{j=1}^{N}
\mathbf{L}^{(d)}_{\ell j}(t/\varepsilon)
\mathbf{J}^{(d)}_{\mathbf h}({\bm x}_{s}(t))\delta{\bm x}_{j}\, ,
\end{equation}
for all node $\ell\in\{1,\dots,N\}$.  The diffusive-like assumption allows to introduce the generalized Laplacian matrix defined in Eq.~\eqref{eq:Lapd} in the latter equation. The matrix
\begin{equation}
\label{eq:J0_general}
\mathbf{J}_0({\bm x}_{s}(t))
=
\left.
\frac{\partial\mathbf{f}}{\partial\mathbf{x}}
\right|_{\mathbf{x}={\bm x}_{s}(t)}\, ,
\end{equation}
is the Jacobian of the local dynamics evaluated along the synchronous
trajectory and similarly
\begin{equation}
\label{eq:partial_Jh}
\mathbf{J}^{(d)}_j({\bm x}_{s}(t))
=
\left.
\frac{\partial
\mathbf{h}^{(d)}
({\bm x}_{1},\ldots,{\bm x}_{d})}
{\partial{\bm x}_{j}}
\right|_{
{\bm x}_{1}=\cdots={\bm x}_{d}={\bm x}_{s}(t)}\, ,
\end{equation}
for all $j=1,\ldots,d$, is the Jacobian of the higher-order coupling function with respect to its
$j$-th argument. The coupling Jacobian appearing in
Eq.~\eqref{eq:general_linear_node} is
\begin{equation}
\label{eq:Jh_general}
\mathbf{J}^{(d)}_{\mathbf h}({\bm x}_{s}(t))
=
\sum_{j=1}^{d}
\mathbf{J}^{(d)}_{j}({\bm x}_{s}(t))
=
\left.
\sum_{j=1}^{d}
\frac{\partial\mathbf{h}^{(d)}}
{\partial{\bm x}_{j}}
\right|_{
{\bm x}_{1}=\cdots={\bm x}_{d}={\bm x}_{s}(t).}\, 
\end{equation}

Finally, by introducing the global perturbation vector
$\delta\mathbf{X}
=
\left(
\delta{\bm x}_{1}^{\top},
\ldots,
\delta{\bm x}_{N}^{\top}
\right)^{\top}$, the complete linearized dynamics takes the compact form
\begin{equation}
\label{eq:general_linear_compact}
\frac{d}{dt}\delta\mathbf{X}
=
\left[
\mathbf{I}_{N}\otimes\mathbf{J}_{0}({\bm x}_{s}(t))
+
\mu\,
\mathbf{L}^{(d)}(t/\varepsilon)
\otimes
\mathbf{J}^{(d)}_{\mathbf h}({\bm x}_{s}(t))
\right]
\delta\mathbf{X}\, .
\end{equation}
Notice here that the restriction to a single head node, i.e., $m=1$, allowed us to derive the latter equation. Indeed for a general $m$-directed interaction with
$m>1$, linearization around the synchronous manifold generally introduces two Laplace matrices~\cite{dorchain2024impact}, one for which both nodes belong to the head, and the other where one node lies in the head and the other in the tail as in Eq.~\eqref{eq:Lapd}. The first matrix is symmetric, and the second is non-symmetric; thus, the two matrices need not share the same eigenbasis. As a consequence the machinery of MSF cannot be applied 
without imposing additional condition on the coupling Jacobians
~\cite{dorchain2024impact}. We therefore do not address the case $m>1$ which represents a natural extension of this work. 

Equation~\eqref{eq:general_linear_compact} highlights the main
difficulty addressed here. Even in the directed case with a
single head node, both the generalized Laplacian
$\mathbf{L}^{(d)}(t/\varepsilon)$ and the coupling
Jacobian $\mathbf{J}^{(d)}_{\mathbf h}({\bm x}_{s}(t))$ may depend explicitly on
time. The linearized coupling therefore contains quantities evolving with different time scales and this fact introduces a major difficulty in the analysis of the stability of the synchronous solution. 

Let us note that Eq.~\eqref{eq:general_linear_compact} can be written as 
\begin{equation}
 \label{eq:compactform}
 \frac{d}{dt}\delta\mathbf{X}
=
\left[
{\bf A}(t)
+
{\bf E}(t/\varepsilon){\bf B}(t)\right]
\delta\mathbf{X}\,,
\end{equation}
for suitable time dependent matrices ${\bf A}$, ${\bf E}$ and ${\bf B}$. As mentioned in the Introduction, previous studies have investigated the stability of non-autonomous linear systems involving slow-fast time scales by relating their exponential stability properties to those of a corresponding averaged system evolving only on the slow time scale. Roughly speaking, the main idea is to project Eq.~\eqref{eq:compactform} onto the direction transverse to the synchronization manifold, and then to infer exponential stability along this direction, from the corresponding averaged network. Starting from the theory developed~\cite{stilwell2006sufficient}, those ideas have subsequently been employed in a variety of synchronization problems~\cite{jardon2024persistent,frasca2008synchronization,parastesh2022blinking}. However, to the best of our knowledge, a complete analysis has been provided only under the assumption of ${\bf B}(t)$ to be equal to the identity matrix, ${\bf E}$ has a zero average, and $\bf A(t)$ is a uniformly Hurwitz matrix for every time~\cite{liberzon2022stability}. Similar assumptions are made in~\cite{shim2025stability} once considering the case ${\bf B}(t)\ne {\bf I}$.

The latter assumptions are not general enough to cover many relevant cases, indeed the matrix $\mathbf{J}_{0}({\bm x}_{s}(t))$ is not necessarily uniformly Hurwitz for every time, and the average of the generalized Laplacian matrix is not vanishing. By assuming ${\bf B}(t)={\bf I}$ relies on the hypothesis of linear coupling function; we herein generalize this approach by considering nonlinear interactions and thus returning a time dependent matrix ${\bf B}(t)$.

By denoting $\langle \mathbf{A}^{(d)}\rangle$ the adjacency tensor associated to the averaged directed hypergraph, that is, for all $t\ge t_0$,
\begin{equation}
\label{eq:averadgetensor}
 \langle \mathbf{A}^{(d)}\rangle = \frac{1}{T}\int_{t}^{t+T} \mathbf{A}^{(d)}(\tau)d\tau\, ,  
\end{equation}
we wish to determine a stability criterion for the stability of the synchronous solution of Eq.~\eqref{eq:general_problem_model} based on the following averaged system 
\begin{equation}
\label{eq:general_problem_model_average}
\frac{d\tilde{{\bm x}}_{\ell}}{dt}
=
\mathbf{f}(\tilde{{\bm x}}_{\ell})
+
\mu
\sum_{j_1,\ldots,j_d=1}^{N}
\langle\mathbf{A}^{(d)}_{\ell j_1\cdots j_d}\rangle
\,
\mathbf{g}^{(d)}
\left(
\tilde{{\bm x}}_{\ell},
\tilde{{\bm x}}_{j_1},\ldots,\tilde{{\bm x}}_{j_d}
\right)\,.
\end{equation}

Theorem~\ref{thm:main} will be devoted to prove such claim. The following lemma, whose proof is presented in Appendix~\ref{App:linearization}, plays a fundamental role in the proof of the Theorem and it is a generalization of a similar lemma proved in~\cite{stilwell2006sufficient}.
\begin{lemma}
\label{eq:lemma1}
 Let us consider a $T-$periodic matrix-valued function ${\bf E}(t)$ and let ${\langle {\bf E}\rangle}$ be its average, i.e.,
\begin{equation}
    \label{eq:Eaverage}
    \langle {\bf E}\rangle=\frac{1}{T}\int_{t}^{t+T}{\bf E}(t)\;dt
\end{equation}
for all $t>t_0$.

Suppose there exist a $C^{1}$ matrix-valued function ${\bf B}(t)$, positive scalars $M_{{\bf B}}$, $L_{{\bf B}}$ such as $\|{\bf B}(t)\|\le M_{{\bf B}}$, $\|d{{\bf B}}/dt(t)\|\le L_{{\bf B}}$ and
\begin{equation}
    \label{eq:averagesys}
    \dot{{\bm x}}(t) = \left[{\bf A}(t) +\langle {\bf E}\rangle{\bf B}(t)\right]{\bm x}(t),\quad {\bm x}(t_0)={\bm x}_0,\; t\ge t_0\, ,
\end{equation}
is uniformly exponentially stable. Then there exist $\varepsilon^*>0$ such as for all fixed $\varepsilon\in(0,\varepsilon^*)$
\begin{equation}
    \label{eq:tempsys}
    \dot{{\bm z}}(t) = \left[{\bf A}(t) +{\bf E}(t/\varepsilon){\bf B}(t)\right]{\bm z}(t),\quad {\bm z}(t_0)={\bm z}_0,\; t\ge t_0\, ,
\end{equation}
is uniformly exponentially stable.   
\end{lemma}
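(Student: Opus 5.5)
We will follow the strategy of~\cite{stilwell2006sufficient} and use a near-identity, or Lyapunov--Perron type, change of variables that removes the fast oscillating part of ${\bf E}$. We also need one implicit assumption: ${\bf A}(t)$ is bounded, say $\|{\bf A}(t)\|\le M_{\bf A}$. This holds in our application because ${\bm x}_s$ is bounded.

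\textbf{Step 1: the oscillating part and its primitive.} Set $\tilde{\bf E}(s)={\bf E}(s)-\langle{\bf E}\rangle$, which is $T$-periodic with zero average. Define
\begin{equation*}
{\bf W}(s)=\int_0^s \tilde{\bf E}(\sigma)\,d\sigma .
\end{equation*}
Since $\tilde{\bf E}$ has zero average, ${\bf W}$ is $T$-periodic and bounded, $\|{\bf W}(s)\|\le M_{\bf W}$. Its derivative is $d{\bf W}(t/\varepsilon)/dt=\varepsilon^{-1}\tilde{\bf E}(t/\varepsilon)$.

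\textbf{Step 2: change of variables.} Introduce
\begin{equation*}
{\bm z}(t)=\left[{\bf I}+\varepsilon{\bf W}(t/\varepsilon){\bf B}(t)\right]{\bm y}(t).
\end{equation*}
For $\varepsilon<1/(2M_{\bf W}M_{\bf B})$ the bracket is invertible, with inverse uniformly bounded by $2$ and uniformly close to the identity. Differentiate and use Eq.~\eqref{eq:tempsys}. The term $\tilde{\bf E}(t/\varepsilon){\bf B}(t){\bm y}$ produced by differentiating ${\bf W}$ cancels exactly the fast part of ${\bf E}(t/\varepsilon){\bf B}(t){\bm y}$. What remains is
\begin{equation*}
\dot{\bm y}=\left[{\bf A}(t)+\langle{\bf E}\rangle{\bf B}(t)+\varepsilon{\bf R}_\varepsilon(t)\right]{\bm y}.
\end{equation*}
The remainder ${\bf R}_\varepsilon$ collects the terms ${\bf W}\dot{\bf B}$, ${\bf W}{\bf B}({\bf A}+{\bf E}{\bf B})$, $({\bf A}+{\bf E}{\bf B}){\bf W}{\bf B}$, and the corrections coming from the inverse $({\bf I}+\varepsilon{\bf W}{\bf B})^{-1}$. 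This is exactly where the hypotheses $\|{\bf B}\|\le M_{\bf B}$ and $\|\dot{\bf B}\|\le L_{\bf B}$ enter, together with the $C^1$ regularity of ${\bf B}$. They give a bound $\sup_t\|{\bf R}_\varepsilon(t)\|\le C$ with $C$ depending only on $M_{\bf A},M_{\bf B},L_{\bf B},M_{\bf W},\|{\bf E}\|_\infty$, and independent of $\varepsilon$ for $\varepsilon$ small.

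\textbf{Step 3: robustness of uniform exponential stability.} Let $\Phi$ be the transition matrix of Eq.~\eqref{eq:averagesys}, so that $\|\Phi(t,\tau)\|\le K e^{-\lambda(t-\tau)}$. By variation of constants,
\begin{equation*}
{\bm y}(t)=\Phi(t,t_0){\bm y}_0+\varepsilon\int_{t_0}^t\Phi(t,\tau){\bf R}_\varepsilon(\tau){\bm y}(\tau)d\tau .
\end{equation*}
Gronwall's inequality then gives
\begin{equation*}
\|{\bm y}(t)\|\le K e^{-(\lambda-\varepsilon KC)(t-t_0)}\|{\bm y}_0\| .
\end{equation*}
Choose $\varepsilon^*=\min\{\lambda/(2KC),\,1/(2M_{\bf W}M_{\bf B})\}$. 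For $\varepsilon<\varepsilon^*$ the ${\bm y}$-system is uniformly exponentially stable with rate $\lambda/2$. The change of variables is a Lyapunov transformation, bounded with bounded inverse uniformly in $t$. Hence ${\bm z}$ satisfies $\|{\bm z}(t)\|\le 4K e^{-\lambda(t-t_0)/2}\|{\bm z}_0\|$, which is the claim.

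\textbf{Main obstacle.} The delicate part is Step 2. Because ${\bf B}$ depends on $t$ and is not the identity, the fast and slow factors appear as a product, so the correction term must contain ${\bf B}(t)$ itself. Otherwise the fast part is not cancelled. This in turn makes $\dot{\bf B}$ appear in the remainder, which is why the Lipschitz-type bound $L_{\bf B}$ is needed. If this bookkeeping misbehaves, the fallback is a Lyapunov-function argument: take $P(t)=\int_t^\infty\Phi^\top\Phi\,ds$ for the averaged system, and evaluate $V=({\bm z}-\varepsilon{\bf W}{\bf B}{\bm z})^\top P\,(\cdot)$ along Eq.~\eqref{eq:tempsys}. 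This yields the same $O(\varepsilon)$ perturbation estimate.
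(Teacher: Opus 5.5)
Your proof is correct, but it takes a genuinely different route from the paper's. It also differs from the argument of Stilwell et al., which the paper says it adapts, so your attribution is off.

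\textbf{The paper's route.} It never changes variables. It takes a quadratic Lyapunov function $v={\bm x}^\top{\bf Q}(t){\bm x}$ for the averaged system and compares the two transition matrices over one fast period $[t,t+\varepsilon T]$ via their Peano--Baker series. By the zero mean of $\tilde{\bf E}$, the first-order term reduces to $\int_t^{t+\varepsilon T}\tilde{\bf E}(\sigma/\varepsilon)[{\bf B}(\sigma)-{\bf B}(t)]\,d\sigma$, of size at most $\tfrac12 M_{\bf E}L_{\bf B}T^2\varepsilon^2$, so the whole difference is $O(\varepsilon^2)$. Over the same period $v$ decreases by an amount of order $\varepsilon$. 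Hence $v$ contracts by a fixed factor on the grid $t_0+k\varepsilon T$, and the decay extends to all $t$ because the transition matrix is bounded on short intervals.

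\textbf{Your route.} You cancel the fast zero-mean part exactly with the averaging transformation ${\bm z}=({\bf I}+\varepsilon{\bf W}(t/\varepsilon){\bf B}(t)){\bm y}$. The remainder is ${\bf R}_\varepsilon={\bf N}-{\bf W}{\bf B}({\bf I}+\varepsilon{\bf W}{\bf B})^{-1}({\bf A}+\langle{\bf E}\rangle{\bf B}+\varepsilon{\bf N})$ with ${\bf N}=({\bf A}+{\bf E}{\bf B}){\bf W}{\bf B}-{\bf W}\dot{\bf B}$. It is indeed uniformly bounded, and a Gronwall robustness step finishes the proof.

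\textbf{Common ingredients.} Both proofs rest on the zero mean of $\tilde{\bf E}$ and on the bound $L_{\bf B}$. That bound enters through ${\bf W}\dot{\bf B}$ for you and through ${\bf B}(\sigma)-{\bf B}(t)$ for the paper. Both also need $\sup_t\|{\bf A}(t)\|<\infty$, which you state explicitly while the paper uses it tacitly in its constant $\alpha$.

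\textbf{What each buys.} The paper's version shows that the averaged system's own Lyapunov function certifies the switched system. It works only with integrals, so nothing is differentiated across switching instants. For piecewise-constant ${\bf E}$, your identity $\tfrac{d}{dt}[\varepsilon{\bf W}(t/\varepsilon)]=\tilde{\bf E}(t/\varepsilon)$ holds only almost everywhere, which suffices for Carath\'eodory solutions. Your version is shorter and needs no converse Lyapunov theorem, Peano--Baker bookkeeping or sampling argument. It also yields an explicit rate $\lambda/2$ and an explicit, if conservative, threshold $\varepsilon^*=\min\{\lambda/(2KC),1/(2M_{\bf W}M_{\bf B})\}$ in terms of $K$, $\lambda$ and the bounds on ${\bf A},{\bf B},\dot{\bf B},{\bf E}$. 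The paper's $\varepsilon^*$ is only the implicit first positive zero of $G(\varepsilon)$, and an explicit estimate of this kind is exactly what the paper lists as open in its conclusion.
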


Let us observe that if ${\bf A}(t)$ and ${\bf B}(t)$ are constant matrices, then the uniform exponential stability Eq.~\eqref{eq:averagesys} can be determined by leveraging on the properties of its spectrum. Moreover, the existence of positive constants $M_{\bf B}$ and $L_{\bf B}$ is guaranteed whenever ${\bf B}(t)$ is periodic and continuously differentiable.
 
Assume to deal with a balanced $T$-periodic  directed $d$-uniform hypergraph, consequently also its averaged version~\eqref{eq:averadgetensor} will be balanced. Let us denote the averaged generalized Laplace matrix 
\begin{equation}
\label{eq:averagelaplace}
\left\langle \mathbf{L}^{(d)} \right\rangle
=
\frac{1}{T}
\int_{t}^{t+T}
\mathbf{L}^{(d)}(\tau)\,d\tau\, ,
\end{equation}
then we can state the follwong
\begin{theorem}
\label{thm:main}
Let ${\bm x}_s(t)$ denotes the synchronous solution to Eq.~\eqref{eq:general_problem_model_average}, let the linearized system about the latter solution be given by
\begin{equation}
\label{eq:general_linear_compact_average}
\frac{d}{dt}\delta\tilde{\mathbf{X}}
=
\left[
\mathbf{I}_{N}\otimes
\mathbf{J}_{0}\bigl({\bm x}_{s}(t)\bigr)
+
\mu\,
\left\langle \mathbf{L}^{(d)} \right\rangle
\otimes
\mathbf{J}^{(d)}_{\mathbf{h}}\bigl({\bm x}_{s}(t)\bigr)
\right]
\delta\tilde{\mathbf{X}}\, ,
\end{equation}
and assume the latter to be uniformly exponentially stable. Assume further the coupling function $\mathbf{h}$ to be smooth enough such that
$\mathbf{J}^{(d)}_{\mathbf{h}}\bigl(\bm{x}_{s}(t)\bigr)$
is a smoothly time-varying matrix whose norm and rate of variation are
bounded for all $t$. 

Then, there exists $\varepsilon^{*}>0$ such that, for every fixed
$\varepsilon\in(0,\varepsilon^{*})$, the ${\bm x}_s(t)$ denotes the synchronous solution to
Eq.~\eqref{eq:general_problem_model} is uniformly exponentially stable.
\end{theorem}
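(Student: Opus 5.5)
The plan is to reduce Theorem~\ref{thm:main} to Lemma~\ref{eq:lemma1} applied to the dynamics transverse to the synchronization manifold. First I write the linearized temporal system~\eqref{eq:general_linear_compact} in the compact form~\eqref{eq:compactform}, with
\begin{equation*}
\mathbf{A}(t)=\mathbf{I}_N\otimes\mathbf{J}_0(\bm{x}_s(t)),\qquad
\mathbf{E}(\tau)=\mu\,\mathbf{L}^{(d)}(\tau)\otimes\mathbf{I}_n,\qquad
\mathbf{B}(t)=\mathbf{I}_N\otimes\mathbf{J}^{(d)}_{\mathbf h}(\bm{x}_s(t)).
\end{equation*}
These factors multiply correctly because $(\mathbf{L}\otimes\mathbf{I}_n)(\mathbf{I}_N\otimes\mathbf{J})=\mathbf{L}\otimes\mathbf{J}$. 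The matrix $\mathbf{E}$ is $T$-periodic, and its average is $\mu\langle\mathbf{L}^{(d)}\rangle\otimes\mathbf{I}_n$ by linearity of~\eqref{eq:averagelaplace}. The hypothesis on $\mathbf{h}$ gives $\mathbf{B}\in C^1$ with bounds $M_{\mathbf B},L_{\mathbf B}$, since Kronecker products with $\mathbf{I}_N$ preserve norms.

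The key subtlety is that the full linearization cannot be exponentially stable. Perturbations along $\mathbf{u}\otimes\mathbb{R}^n$ obey $\dot{\bm\xi}=\mathbf{J}_0\bm\xi$, which for an oscillatory $\bm{x}_s$ has a neutral direction. So I read "uniformly exponentially stable" in the statement as referring to the transverse dynamics, and project accordingly.

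For the projection I use balance, assumed at all times. Since $\mathbf{L}^{(d)}(t)\mathbf{u}=0$ and $\mathbf{u}^\top\mathbf{L}^{(d)}(t)=0$ for every $t$, the same holds for $\langle\mathbf{L}^{(d)}\rangle$. I choose a constant orthonormal basis $\mathbf{U}=[\mathbf{u},\mathbf{V}]$ of $\mathbb{R}^N$, with $\mathbf{V}\in\mathbb{R}^{N\times(N-1)}$ spanning $\mathbf{u}^\perp$. Then $\mathbf{U}^\top\mathbf{L}^{(d)}(t)\mathbf{U}=\mathrm{diag}(0,\hat{\mathbf{L}}(t))$ with $\hat{\mathbf{L}}(t)=\mathbf{V}^\top\mathbf{L}^{(d)}(t)\mathbf{V}$. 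The off-diagonal blocks vanish precisely because $\mathbf{u}$ is both a left and a right null vector.

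Setting $\delta\mathbf{X}=(\mathbf{U}\otimes\mathbf{I}_n)(\bm\xi_\parallel,\bm\eta)$, the system splits exactly into the parallel part $\dot{\bm\xi}_\parallel=\mathbf{J}_0\bm\xi_\parallel$ and the transverse part
\begin{equation*}
\dot{\bm\eta}=\bigl[\mathbf{I}_{N-1}\otimes\mathbf{J}_0(\bm{x}_s(t))+\mu\,(\hat{\mathbf{L}}(t/\varepsilon)\otimes\mathbf{I}_n)(\mathbf{I}_{N-1}\otimes\mathbf{J}^{(d)}_{\mathbf h}(\bm{x}_s(t)))\bigr]\bm\eta .
\end{equation*}
The same change of variables applied to~\eqref{eq:general_linear_compact_average} gives the averaged transverse system, with $\langle\hat{\mathbf{L}}\rangle=\mathbf{V}^\top\langle\mathbf{L}^{(d)}\rangle\mathbf{V}$. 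Because $\mathbf{V}$ is constant, projecting commutes with averaging. This is the step where balance at all times is essential: otherwise the transverse subspace would not be invariant, or would depend on $t/\varepsilon$.

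Next I apply Lemma~\ref{eq:lemma1} with $\hat{\mathbf{A}}=\mathbf{I}_{N-1}\otimes\mathbf{J}_0$, $\hat{\mathbf{E}}=\mu\hat{\mathbf{L}}\otimes\mathbf{I}_n$ and $\hat{\mathbf{B}}=\mathbf{I}_{N-1}\otimes\mathbf{J}^{(d)}_{\mathbf h}$. This gives $\varepsilon^*>0$ such that the transverse temporal system is uniformly exponentially stable for $\varepsilon\in(0,\varepsilon^*)$. That is exactly the transverse exponential stability of the synchronization manifold, and hence of $\bm{x}_s$ in the sense of the MSF, for the temporal system~\eqref{eq:general_problem_model}.

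To pass from the linearization to the nonlinear system, I use that the remainder is $O(\|\bm\eta\|^2)$ uniformly in $t$. This follows because $\mathbf{f}$ and $\mathbf{h}$ are $C^2$, $\bm{x}_s$ is bounded, and $\mathbf{A}^{(d)}$ is bounded. A standard Lyapunov or Gronwall argument for uniformly exponentially stable linear parts then yields local exponential stability of the manifold.

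I expect the main obstacle to be the Lemma itself, which here I take as given. Within the theorem, the delicate point is making the reduction rigorous: the parallel mode must be removed cleanly, and the time-independent basis $\mathbf{V}$ is what keeps the reduced $\hat{\mathbf{B}}$ bounded with bounded derivative.
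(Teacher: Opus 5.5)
Your proposal is correct and follows the same route as the paper. A constant unitary change of basis with first vector $\mathbf{u}$, valid for every $t$ thanks to balance, splits off the parallel mode $\dot{\bm\xi}_\parallel=\mathbf{J}_0\bm\xi_\parallel$; the projection commutes with averaging; and Lemma~\ref{eq:lemma1} is applied to the transverse block. The differences are minor:
\begin{itemize}
\item The paper uses the (generally complex) Schur basis of $\langle\mathbf{L}^{(d)}\rangle$, but its triangular structure is never used. Your real completion $[\mathbf{u},\mathbf{V}]$ works equally well and keeps the system real, which is what the Lemma's quadratic form $\bm{x}^\top\mathbf{Q}\bm{x}$ presumes.
\item You make explicit that the hypothesis is only meaningful as transverse exponential stability. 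The paper passes over this point when it deduces transverse stability ``by hypothesis''.
\item Your closing nonlinear step goes beyond what the paper proves. It is also less routine than you claim: linearizing about $\bm{x}_s(t)$ leaves cross terms between the non-decaying parallel component and $\bm\eta$ in the remainder, so the remainder is not $O(\|\bm\eta\|^2)$ as stated.
\end{itemize}
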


\begin{proof}
Let us to denote by $\langle \mathbf{U} \rangle$ the Schur transformation of $\left\langle \mathbf{L}^{(d)} \right\rangle$. Then, there exists a unitary matrix $\mathbf{P}$ such that $\langle \mathbf{U} \rangle = \mathbf{P}^{-1}\left\langle \mathbf{L}^{(d)} \right\rangle\mathbf{P}$, with 
\begin{equation}
\label{eq:SchurTransf}
    \langle \mathbf{U} \rangle
    =
    \begin{pmatrix}
        0
        &
        \mathbf{0}_{1\times(N-1)}
        \\
        \mathbf{0}_{(N-1)\times 1}
        &
        \langle \mathbf{U}_s \rangle\, 
    \end{pmatrix}\,,
\end{equation}
where $\langle \mathbf{U}_s \rangle\in\mathbb{C}^{(N-1)\times(N-1)}$ is upper triangular. Let us define a transformed perturbation vector by using $\mathbf{P}$, namely $\delta \tilde{\mathbf{X}}=(\mathbf{P}^{-1}\otimes\mathbf{I}_n)\delta\tilde{\mathbf{Y}}$; then Eq.~\eqref{eq:general_linear_compact_average} can be split into a mode parallel to the synchronization manifold and the remaining ones transverse to it. More specifically, we have
\begin{equation}
\label{eq:SLLinearmodelcompactformhypo_trans_avera}
\frac{d}{dt}\;\delta\tilde{\mathbf{Y}}_{\parallel}
=\mathbf{J}_{0}(\bm{x}_{s}(t))
\delta\tilde{\mathbf{Y}}_{\parallel}\, ,
\end{equation}
and
\begin{equation}
\label{eq:transverse_average}
\frac{d}{dt}\;\delta\tilde{\mathbf{Y}}_{\bot}
=
\Bigg[
\mathbf{I}_{N-1}
\otimes
\mathbf{J}_{0}(\bm{x}_{s}(t))
+
\mu\,
\langle\mathbf{U}_s\rangle
\otimes
\mathbf{J}^{(d)}_{\mathbf{h}}(\bm{x}_{s}(t))
\Bigg]
\delta\tilde{\mathbf{Y}}_{\bot}\, .
\end{equation}

Under the assumption of the balanced directed higher-order structures, we can use the same unitary matrix $\mathbf{P}$ to transform the Laplace matrix, $\mathbf{L}^{(d)}(t)$, namely $\mathbf{U}(t) = \mathbf{P}^{-1}\mathbf{L}^{(d)}(t)\mathbf{P}$ where
\begin{equation}
\label{eq:SchurTransfb}
     \mathbf{U}(t) 
    =
    \begin{pmatrix}
        0
        &
        \mathbf{0}_{1\times(N-1)}
        \\
        \mathbf{0}_{(N-1)\times 1}
        &
        \mathbf{U}_s(t)
    \end{pmatrix}\, .
\end{equation}
By starting from Eq.~\eqref{eq:general_linear_compact}, we can obtain the parallel and transverse modes evolution
\begin{equation}
\label{eq:SLLinearmodelcompactformhypo_trans_temp}
\frac{d\,\delta\mathbf{Y}_{\parallel}}{dt}
=\mathbf{J}_{0}(\bm{x}_{s}(t))
\delta\mathbf{Y}_{\parallel}\, ,
\end{equation}
and
\begin{equation}
\label{eq:transverse_}
\frac{d\,\delta\mathbf{Y}_{\bot}}{dt}
=
\Bigg[
\mathbf{I}_{N-1}
\otimes
\mathbf{J}_{0}(\bm{x}_{s}(t))
+
\mu\,
\mathbf{U}_s(t/\varepsilon)
\otimes
\mathbf{J}^{(d)}_{\mathbf{h}}(\bm{x}_{s}(t))
\Bigg]
\delta\mathbf{Y}_{\bot}\, .
\end{equation}

Let us observe that $\mathbf{U}_s(t)$ is not necessarily triangular, but its time-averaged does. Indeed,
\begin{align} \notag
\langle \mathbf{U} \rangle
&= \mathbf{P}^{-1}\langle \mathbf{L}^{(d)} \rangle\mathbf{P} = \frac{1}{T}\int_t^{t+T}
\mathbf{P}^{-1}\mathbf{L}^{(d)}(\tau)\mathbf{P}\,d\tau.
\end{align}
Then,
\begin{equation}
\langle \mathbf{U}_s \rangle
=
\frac{1}{T}\int_t^{t+T}
\mathbf{U}_s(\tau)\,d\tau \,.
\end{equation}
By hypothesis, the trivial solution of Eq.~\eqref{eq:transverse_average} is uniformly exponentially stable. By a direct application of Lemma~\ref{eq:lemma1}, it follows that the trivial solution of Eq.~\eqref{eq:transverse_} is also uniformly exponentially stable for every fixed $0<\varepsilon<\varepsilon^\ast$. Equivalently, the synchronization manifold $
\bm{x}_1(t)=\bm{x}_2(t)=\cdots=\bm{x}_N(t)=\bm{x}_s(t)
$ of Eq.~\eqref{eq:general_problem_model} is uniformly exponentially stable.
\end{proof}

\section{Application to Stuart-Landau oscillators}
\label{sec:2}

The aim of this Section is to introduce a model of Stuart--Landau oscillators coupled via a higher-order network and with a nonlinear coupling function $\mathbf h$. We will first consider the simpler case where $\mathbf h$ is resonant and thus $\mathbf{J}^{(d)}_{\mathbf h}(t)$ is constant; then we will relax the assumption by considering non-resonant coupling, so to obtain a $T-$periodic Jacobian, $\mathbf{J}^{(d)}_{\mathbf h}(t)$.

\subsection{The Model}

Let us consider a population of $N$ identical
Stuart--Landau (SL) oscillators interacting through a time-varying directed
$d$-uniform hypergraph. The state of oscillator $\ell$ is described by
a complex amplitude $w_\ell\in\mathbb{C}$, with
$\ell=1,\ldots,N$. Its dynamics is given by
\begin{align}
\label{eq:SLmodelDirecHypGraph}
\frac{dw_{\ell}}{dt}
&=
\mathbf{f}(w_\ell)
+\\ \notag
&+\mu
\sum_{j_1,\ldots,j_d=1}^{N}
\mathbf{g}^{(d)}
\left(
w_\ell,w_{j_1},\ldots,w_{j_d}
\right)
\mathbf{A}^{(d)}_{\ell j_1\cdots j_d}(t/\varepsilon)\, ,
\end{align}
where $\mathbf{f}(w_\ell)
=
\sigma w_\ell
-
\beta |w_\ell|^2w_\ell$ is the Stuart--Landau vector field,
$\sigma=\sigma_{\Re}+i\sigma_{\Im}$ and
$\beta=\beta_{\Re}+i\beta_{\Im}$ are complex parameters, and
$\mu=\mu_{\Re}+i\mu_{\Im}$ denotes the complex coupling strength.
The function
$\mathbf{g}^{(d)}:\mathbb{C}^{d+1}\rightarrow\mathbb{C}$
describes the nonlinear multivariate interaction associated with a
directed $d$-hyperedge. In the \textit{supercritical regime} corresponding to  $\sigma_{\Re}>0$ and $\beta_{\Re}>0$, each isolated SL oscillator admits a stable limit-cycle (LC) solution~\cite{nakao2014complex,kuramoto1984chemical} of the form $w_{\mathrm{LC}}(t)=\left|w_{\mathrm{LC}}\right| \exp(i\boldsymbol{\omega} t)$, with amplitude $\left|w_{\mathrm{LC}}\right|=\displaystyle\sqrt{\sigma_{\Re}/\beta_{\Re}}$, and frequency $\boldsymbol{\omega}=\sigma_{\Im}-\beta_{\Im}\displaystyle \sigma_{\Re}/\beta_{\Re}$. Global synchronization of the coupled SL oscillators requires
$
w_\ell(t)=w_{\mathrm{LC}}(t),
\text{ with } \ell=1,\ldots,N,
$
to be a linearly stable solution of Eq.~\eqref{eq:SLmodelDirecHypGraph} for a nonzero coupling strength. This solution defines the synchronization manifold. To ensure that such a solution exists, we suppose that $\mathbf{g}^{(d)}:\mathbb{C}^{d+1} \to \mathbb{C}$ is a diffusive-like complex-valued function, and we assume $\mathbf{h}^{(d)}$ to be given by the complex monomial coupling function
\begin{equation}
\label{eq.comxmonofunct}
\mathbf{h}^{(d)}\left(w_{j_1},\ldots,w_{j_d}\right)
=
\prod_{s=1}^{d}
w_{j_s}^{a_s}\bar{w}_{j_s}^{b_s}\, ,
\end{equation}
with $a_s$ and $b_s$ positive integers.  

To establish the stability of the synchronization manifold, we perform a linear stability analysis in the vicinity of the LC solution. We therefore consider a heterogeneous perturbation around this solution $w_{\textrm{LC}}(t)$
\begin{equation}
\label{eq:perturb}
w_\ell(t) = w_{\textrm{LC}}(t)(1 + \rho_\ell(t))e^{i\theta_\ell(t)}, \quad \ell=1,\dots, N\, ,
\end{equation}
where $\rho_\ell$ and $\theta_\ell$ are small perturbations respectively of the amplitude and phases whose time evolution can be obtained by substituting the above expression into Eq.~\eqref{eq:SLmodelDirecHypGraph} and retrieving only first order terms
\begin{equation}
\left\{
\begin{aligned}
\frac{d \rho_\ell}{dt}
&= -2\sigma_{\Re}\,\rho_\ell
+ \lvert w_{\text{LC}} \rvert^{-1}
\Re\!\left[ e^{-i\boldsymbol{\omega} t}\,\mathscr{L}_\ell \right], \\[1.2ex]
\frac{d \theta_\ell}{dt}
&= -2\beta_{\Im}\dfrac{\sigma_{\Re}}{\beta_{\Re}}\,\rho_\ell
+ \lvert w_{\text{LC}} \rvert^{-1}
\Im\!\left[ e^{-i\boldsymbol{\omega} t}\,\mathscr{L}_\ell \right],
\end{aligned}
\right.
\end{equation}
with $\mathscr{L}_\ell$ identifying the linearization of the coupling term in Eq.~\eqref{eq:SLmodelDirecHypGraph}. A straightforward derivation allows us to obtain the linear form of the coupling function $\mathbf{h}^{(d)}$ (see Appendix~\ref{App:linearization}), namely 
\begin{equation}
\label{eq:Linearizedcoufunct}
\begin{aligned}
\mathbf{h}^{(d)}\left(w_{j_1},\ldots,w_{j_d}\right)
&\sim
\left|w_{\mathrm{LC}}\right|^{\boldsymbol{\alpha}}
e^{i\boldsymbol{\gamma}\boldsymbol{\omega} t}
\Biggl[
1+\sum_{s=1}^{d}(a_s+b_s)\rho_{j_s}
\\
&
+i\sum_{s=1}^{d}(a_s-b_s)\theta_{j_s}
\Biggr],
\end{aligned}
\end{equation}
where we defined $\boldsymbol{\alpha} = \sum_s (a_s + b_s)$ and $\boldsymbol{\gamma} = \sum_s (a_s - b_s)$. Thus, in general, the model can exhibit two frequencies: one specific to the Stuart-Landau oscillator, and the second induced by the coupling term namely $\boldsymbol{\gamma} \boldsymbol{\omega}$. After some straightforward calculations, we can deduce  
\\
{\small
\begin{equation}
\label{eq:Linearizedcouterm}
\begin{split}
|w_{\mathrm{LC}}|^{-1}e^{-i\boldsymbol{\omega} t}\,\mathscr{L}_\ell
&\sim
|w_{\mathrm{LC}}|^{\boldsymbol{\alpha}-1}\Bigl[
\boldsymbol{\alpha} T_1(t)
\sum_{j} \mathbf{L}^{(d)}_{\ell j}(t/\varepsilon)\,\rho_j
\\
&\hspace{-1.5em}
-\boldsymbol{\gamma} T_2(t)
\sum_{j} \mathbf{L}^{(d)}_{\ell j}(t/\varepsilon)\,\theta_j
+i\boldsymbol{\gamma} T_1(t)
\sum_{j} \mathbf{L}^{(d)}_{\ell j}(t/\varepsilon)\,\theta_j
\\
&\hspace{-1.5em}
+i\boldsymbol{\alpha} T_2(t)
\sum_{j} \mathbf{L}^{(d)}_{\ell j}(t/\varepsilon)\,\rho_j
\Bigr],
\end{split}
\end{equation}
}
where  
\begin{equation}
\label{eq:T1T2}
\begin{aligned}
T_1(t) &= \mu_{\Re}\cos[(\boldsymbol{\gamma}-1)\boldsymbol{\omega} t] - \mu_{\Im}\sin[(\boldsymbol{\gamma}-1)\boldsymbol{\omega} t],\\
T_2(t) &= \mu_{\Im}\cos[(\boldsymbol{\gamma}-1)\boldsymbol{\omega} t] + \mu_{\Re}\sin[(\boldsymbol{\gamma}-1)\boldsymbol{\omega} t]. 
 \end{aligned}
\end{equation}
Eventually, we get the following linear expression
\begin{widetext}
\begin{equation} \label{eq:SLLinearmodelDirectHyperGraph2}
    \left\{
    \begin{aligned}
        \frac{d{\rho}_\ell}{dt} &= -2\sigma_{\Re}\rho_\ell 
        + |w_{\text{LC}}|^{\boldsymbol{\alpha} - 1} \left[ \boldsymbol{\alpha} T_1(t)\sum_j \mathbf{L}^{(d)}_{\ell j}(t/\varepsilon)
        \rho_j -  \boldsymbol{\gamma} T_2(t)\sum_j \mathbf{L}^{(d)}_{\ell j}(t/\varepsilon) 
        \theta_j \right], \\
        \frac{d{\theta}_\ell}{dt} &= -2\beta_{\Im}\dfrac{\sigma_{\Re}}{\beta_{\Re}}\rho_\ell 
         + |w_{\text{LC}}|^{\boldsymbol{\alpha} - 1} \left[\boldsymbol{\gamma} T_1(t)\sum_j \mathbf{L}^{(d)}_{\ell j}(t/\varepsilon) 
        \theta_j + \boldsymbol{\alpha} T_2(t)\sum_j \mathbf{L}^{(d)}_{\ell j}(t/\varepsilon) 
        \rho_j\right]\, .
    \end{aligned}
    \right. 
\end{equation}
\end{widetext}

By using a stack state vector $\mathcal{E} = (\rho_1, \theta_1, \dots , \rho_N, \theta_N)$, we can rewrite the latter equation as follows 
\begin{equation}
\label{eq:SLLinearmodelcompactform}
\begin{aligned}
\frac{d \mathcal{E}}{dt}
&=
\Bigg[ \mathbf{I}_N \otimes
\begin{pmatrix}
    -2\sigma_{\Re} & 0 \\
    -2\beta_{\Im}\dfrac{\sigma_{\Re}}{\beta_{\Re}} & 0
\end{pmatrix}+\\
&+ |w_{\text{LC}}|^{\boldsymbol{\alpha}-1}\mathbf{L}^{(d)}(t/\varepsilon)\otimes
\begin{pmatrix}
    \boldsymbol{\alpha} T_{1}(t) & -\boldsymbol{\gamma} T_{2}(t) \\
   \boldsymbol{\alpha} T_{2}(t) & \boldsymbol{\gamma} T_{1}(t)
\end{pmatrix}
\Bigg]
\mathcal{E}
\, .
\end{aligned}
\end{equation}

Note that, the Stuart--Landau dynamics is invariant under a global phase rotation,
$w_\ell \mapsto w_\ell e^{i\phi}$, reflecting the rotational symmetry of
its limit cycle. To preserve this symmetry in the coupled system, we can
require the nonlinear coupling to transform consistently with the
Stuart--Landau dynamics. For the monomial coupling considered here,
this is achieved by imposing the \textit{resonance condition}
\begin{equation}
\label{eq:gamma1}
\boldsymbol{\gamma} = \sum_s (a_s - b_s)= 1\, .
\end{equation}
The latter condition implies that the functions $T_1(t)$ and $T_2(t)$ given in Eq.~\eqref{eq:T1T2} are in reality constant ones, and thus the temporal dependence of the linearized coupling is entirely carried by the Laplace matrix. Indeed inserting the resonance condition of Eq.~\eqref{eq:gamma1} into Eq.~\eqref{eq:T1T2}, we get $T_1(t)=\mu_{\Re}$ and $T_2(t)=\mu_{\Im}$. Then,
\begin{equation}
\label{eq:SLLinearmodelcompactformhypo}
\begin{aligned}
\frac{d\mathcal{E}}{dt}
&=
\Bigg[
\mathbf{I}_N \otimes
\begin{pmatrix}
-2\sigma_{\Re} & 0 \\
-2\beta_{\Im}\dfrac{\sigma_{\Re}}{\beta_{\Re}} & 0
\end{pmatrix}+
\\
&+ |w_{\mathrm{LC}}|^{\boldsymbol{\alpha}-1}
\mathbf{L}^{(d)}(t/\varepsilon)\otimes
\begin{pmatrix}
\boldsymbol{\alpha} \mu_{\Re} & -\mu_{\Im} \\
\boldsymbol{\alpha} \mu_{\Im} & \mu_{\Re}
\end{pmatrix}
\Bigg]
\mathcal{E}\, .
\end{aligned}
\end{equation}
Consequently, when the switching is sufficiently fast, i.e., for sufficiently small values of \(\varepsilon\), the exponential stability of the averaged counterpart of Eq.~\eqref{eq:SLLinearmodelcompactformhypo} can be established analytically, since both the local and coupling Jacobian matrices are constant ones.

In the non-resonant case
\begin{equation}
    \label{eq:nonresonant}
   \boldsymbol{\gamma} = \sum_s (a_s - b_s)\ne 1\, ,
\end{equation}
the averaged Jacobian of Eq.~\eqref{eq:SLLinearmodelcompactform} is given by
\begin{equation}
 \begin{aligned}
       \langle{\bf J}\rangle&=\mathbf{I}_N \otimes
\begin{pmatrix}
    -2\sigma_{\Re} & 0 \\
    -2\beta_{\Im}\dfrac{\sigma_{\Re}}{\beta_{\Re}} & 0
\end{pmatrix}+\\
&+ |w_{\text{LC}}|^{\boldsymbol{\alpha}-1}\langle\mathbf{L}^{(d)}\rangle\otimes
\begin{pmatrix}
    \boldsymbol{\alpha} T_{1}(t) & -\boldsymbol{\gamma} T_{2}(t) \\
   \boldsymbol{\alpha} T_{2}(t) & \boldsymbol{\gamma} T_{1}(t)
\end{pmatrix}\, ,
 \end{aligned}
\end{equation}
and it results to be $T'=2\pi /|\boldsymbol{\gamma}-1|$-periodic. The stability of  trivial solution can be deduced by applying the Floquet theory~\cite{rspa.2025.0907}.

To illustrate the theoretical framework developed in the previous sections, we consider the time evolution of the underlying structure to be obtained by switching between a finite set of directed hyperrings (see Fig.~\ref{fig1} for illustration); namely higher-order structures where nodes can be arranged in a ring and therefore characterized by circulant generalized Laplacian matrices. Since all circulant matrices are diagonalizable by the Fourier basis, temporal hypergraphs switching between different hyperring configurations share a common eigenbasis. Moreover, the Laplacian eigenvalues can be explicitly expressed in terms of the structural parameters, allowing us to identify parameter regimes in which the instantaneous hypergraphs are individually unstable while their averaged counterpart is stable.
\begin{figure}[!ht]
    \centering
    \includegraphics[width=1\linewidth]{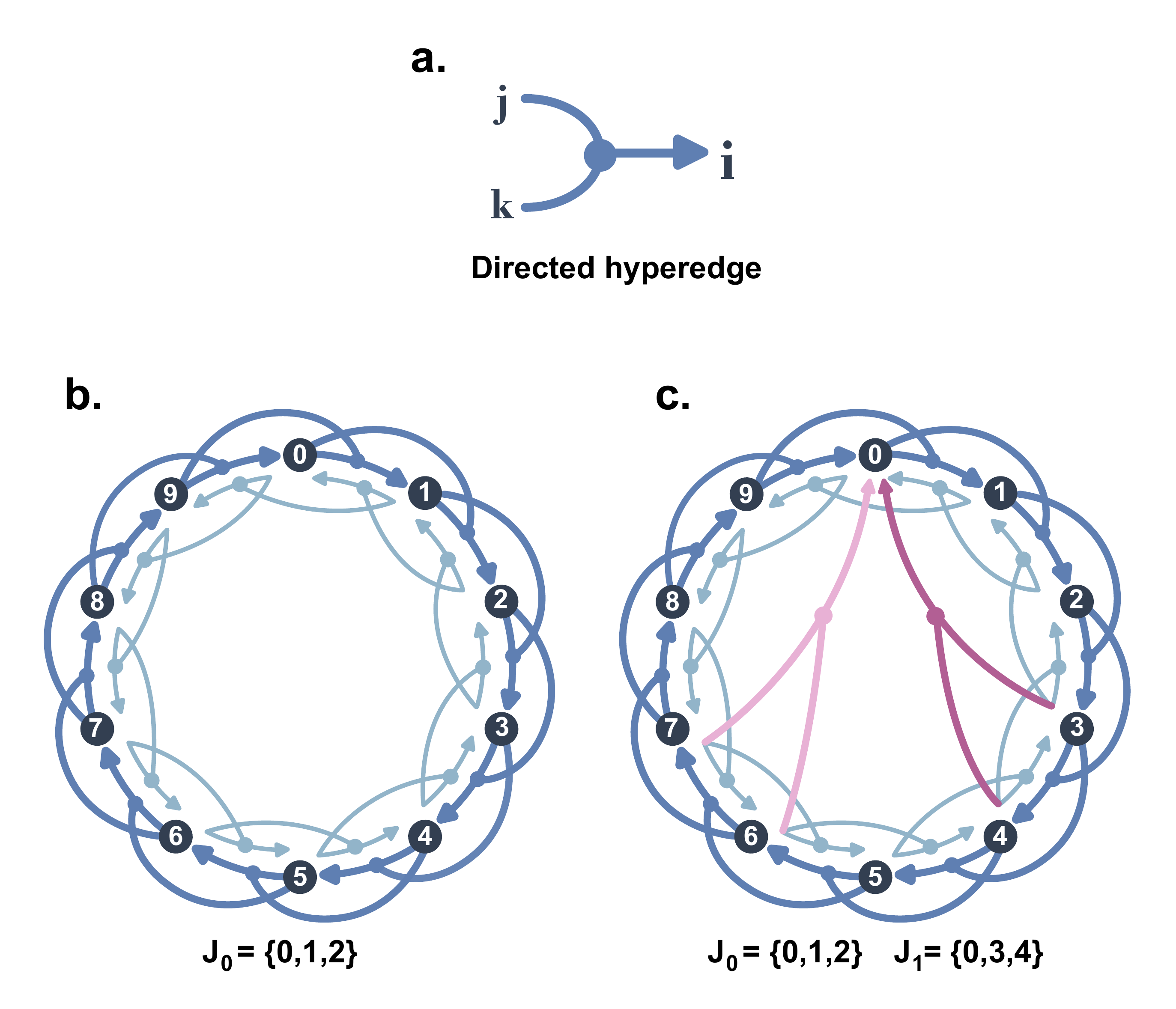}
    \caption{\textbf{Balanced directed $d$-uniform hyperring}. Panel \textbf{a.} Illustration of a directed $2-$hyperedge, nodes $j$ and $k$ form the tail and act on $i$ in the head of the hyperedge. Panel \textbf{b.} Construction of a directed $2$-uniform hyperring starting from the selected nodes $\{0,1,2\}$. We construct two familly of hyperedges, one consisting of dark blue hyperedges oriented in the $``+"$ direction, and the other consisting of light blue hyperedges oriented in the $``-"$ direction.  Panel \textbf{c.} Display the first two hyperedges that are formed when considering the set  $\{0,3,4\}$.
}
    \label{fig1}
\end{figure}

\subsection{Spectral properties of balanced directed $d$-uniform hyperring}\label{subsec:spectBalhyper}

The goal of this section is to provide an algorithm to build a $d$-uniform balanced $1$-directed hypergraph whose generalized lapalcian matrix is balanced, possibly weighted, with $N$ nodes, and then we provide an explicit expression for its spectrum. Roughly speaking the idea is to first fix an head node, say $0$, $d$ tails nodes, $j_1^{(0)}, \dots, j_d^{(0)}$ and a weight $p_0>0$, in this way we define an hyperedge encoded by the adjacency tensor $\mathbf{A}^{(d)}_{0 j_1^{(0)}\cdots j_d^{(0)}}
=p_0$. Then we ``slide'' the hyperedge by creating $N-1$ sets of $(d+1)$ nodes obtained by adding to all the above labels, $J_0^+=\{0, j_1^{(0)},\cdots, j_d^{(0)}\}$, the integers $\ell=1,\dots,N-1$, i.e., $J_\ell^{+}=\{\ell, j_1^{(0)}+\ell,\cdots, j_d^{(0)}+\ell\}$. The latter hyperedges can be encoded into the adjacency tensor
\begin{equation}
\label{eq:adjell}
\mathbf{A}^{(d)}_{\ell j_1^{(0)}+\ell\cdots j_d^{(0)}+\ell}
=p_0,\quad \ell=1,\dots,N-1\, .
\end{equation}
Let us remember that node indexes should be considered modulo $N$ and that we should avoid repeated indexes in the hyperedges, sort of self-loops, and moreover the hyperedges must be distinct, i.e., avoid sort of multi-link. The latter requirements induces some constraints on the choice of the initial indexes $j_1^{(0)}\cdots j_d^{(0)}$ and the size of the hypergraph $N$. 
By using the above construction we obtain from Eq.~\eqref{eq:genindeg} the generalized in-degree of nodes $0$
\begin{eqnarray*}
 k_0^{\mathrm{in}}&=&\frac{1}{d!}
\sum_{j_1,\ldots,j_d=1}^{N-1}
\mathbf{A}^{(d)}_{0 j_1\cdots j_d}=
p_0\sum_{j_1<\ldots<j_d}^{N-1}\delta_{j_1j_1^{(0)}}\dots \delta_{j_dj_d^{(0)}}\notag\\
&=&p_0\, ,
\end{eqnarray*}
namely the entries of the adjacency tensor are zero but the one corresponding to the set of indexes $j_1^{(0)}\cdots j_d^{(0)}$, whose value is $p_0$. By using the cyclic construction, the same holds true for all nodes $\ell=1,\dots,N-1$. Similarly we can compute the total weight of directed hyperedges containing node $0$ in the head and node $j$ in the tail from Eq.~\eqref{eqkinellj}
\begin{align}
k_{0j}
&=
\frac{1}{(d-1)!}
\sum_{j_2,\ldots,j_d=1}^{N-1}
\mathbf{A}^{(d)}_{0j j_2\cdots j_d}
\\ \notag
&=
p_0
\sum_{j_2<\cdots<j_d}^{N-1}
\delta_{j_2 j_2^{(0)}}\cdots
\delta_{j_d j_d^{(0)}}
=
p_0.
\end{align}
Hence $k_{0 j}=p_0$ for $j\in\{j_1^{(0)},\dots,j_d^{(0)}\}$. In conclusion the generalized Laplacian matrix defined in Eq.~\eqref{eq:Lapd} is given by
\begin{equation}
    \mathbf{L}^{(d)}_{0 j}
=
\begin{cases}
-p_0 d!
& j=0,\\[2mm]
p_0(d-1)!
& j\in\{j_1^{(0)},\dots,j_d^{(0)}\}\, .
\end{cases}
\end{equation}
The other entries can be computed similarly by using the cyclic construction defined above. 

We can also compute the total weight of hyperedges that have node $0$ in tail from Eq.~\eqref{eq:kout} to obtain
\begin{equation*}
 k_0^{\mathrm{out}}
= dp_0\, ,
\end{equation*}
and, again, by cyclic construction $k_\ell^{\mathrm{out}}
= dp_0$ for all $\ell=1,\dots,N-1$. We have thus proved that
\begin{equation*}
    k_\ell^{\mathrm{out}}=dp_0=dk_\ell^{\mathrm{in}},\quad \ell=0,\dots,N-1\, ,
\end{equation*}
namely the hypergraph is balanced and circulant by construction. Importantly, this construction can be viewed as a natural generalization of directed ring networks to the setting of directed hypergraphs.

Let us observe that a similar construction can be obtained by ``sliding'' the basic hyperedge, $J_0^-=\{0, j_1^{(0)}-N,\cdots, j_d^{(0)}-N\}$, on the other direction, i.e., by creating $N-1$ sets of $(d+1)$ nodes obtained by removing all the above labels from the integers $\ell=1,\dots,N-1$. Hence, we obtain a new family of directed hyperedges $J_\ell^{-}=\{\ell, \ell-j_1^{(0)}+N,\cdots, \ell-j_d^{(0)}+N\}$ with the indices taken modulo $N$, and with 
\begin{equation}
\label{eq:adjellm}
\mathbf{A}^{(d)}_{\ell (\ell-j_1^{(0)}+N)\cdots (\ell-j_d^{(0)}+N)}
=q_0,\quad \ell=1,\dots,N-1\, ,
\end{equation}
for some $q_0>0$. Generally speaking, we can eventually build a balanced directed $d$-uniform hyperring by combining different basic hyperedges, $J_r=\{0,j_1^{(r)},\dots, j_d^{(r)}\}$,  each one with its own weight and apply the shift as described above. The resulting adjacency tensor will the sum of the ones obtained from each single shift.

Suppose we generate the hypergraph starting from several distinct basic hyperedges $J_r$, where $r = 1, \dots, p$, and such that for each hyperedge we construct the families of hyperedges in both two possible directions. Thanks to the circular nature of the matrix $\mathbf{L}^{(d)}$, we can diagonalize the latter via the Fourier basis, whose $k$-th eigenvector, $k=1,\ldots,N$,  is given by
\begin{equation}
    \Psi^{(k)}=\frac{1}{\sqrt{N}}
    \begin{pmatrix}
        1\\
        e^{i\varphi_k}\\
        e^{2i\varphi_k}\\
        \vdots \\
        e^{(N-1)i\varphi_k}
    \end{pmatrix},
\end{equation}
where $\varphi_k=2\pi(k-1)/N$.
The mode $k=1$ is associated to the homogenous vector and corresponds to
$\Lambda_1=0$. For any node $j$, we have $\Psi^{(k)}_{\ell+j} = e^{ij\varphi_k}\Psi^{(k)}_{\ell}$ and $\Psi^{(k)}_{\ell-j} = e^{-i j\varphi_k}\Psi^{(k)}_{\ell}$.  Thus, we can straightforward get $\mathbf{L}^{(d)}\Psi^{(k)}= \Lambda_k\Psi^{(k)}$ with 
\begin{equation}
    \Lambda_k = (d-1)!\sum_{r=1}^p\Big[p_r\sum_{j\in\mathbf{J}_r}\big(e^{ij\varphi_k}-1\big) +q_r\sum_{j\in\mathbf{J}_r}\big(e^{-ij\varphi_k}-1\big) \Big]\, ,
\end{equation}
from which we can extract the real and imaginary parts:
\begin{equation}
\mathrm{Re}\Lambda_k
=
(d-1)!\sum_{r=1}^{p}
(p_r+q_r)
\sum_{j\in \mathbf{J}_r}
\left[
\cos(j\theta_k)-1
\right]\, ,
\end{equation}
and
\begin{equation}
\mathrm{Im}\Lambda_k
=
(d-1)!\sum_{r=1}^{p}
(p_r-q_r)
\sum_{j\in \mathbf{J}_r}
\sin(j\theta_k)\, .
\end{equation}

\subsection{Numerical results}

\begin{figure*}[!ht]
    \centering
    \includegraphics[width=1\linewidth]{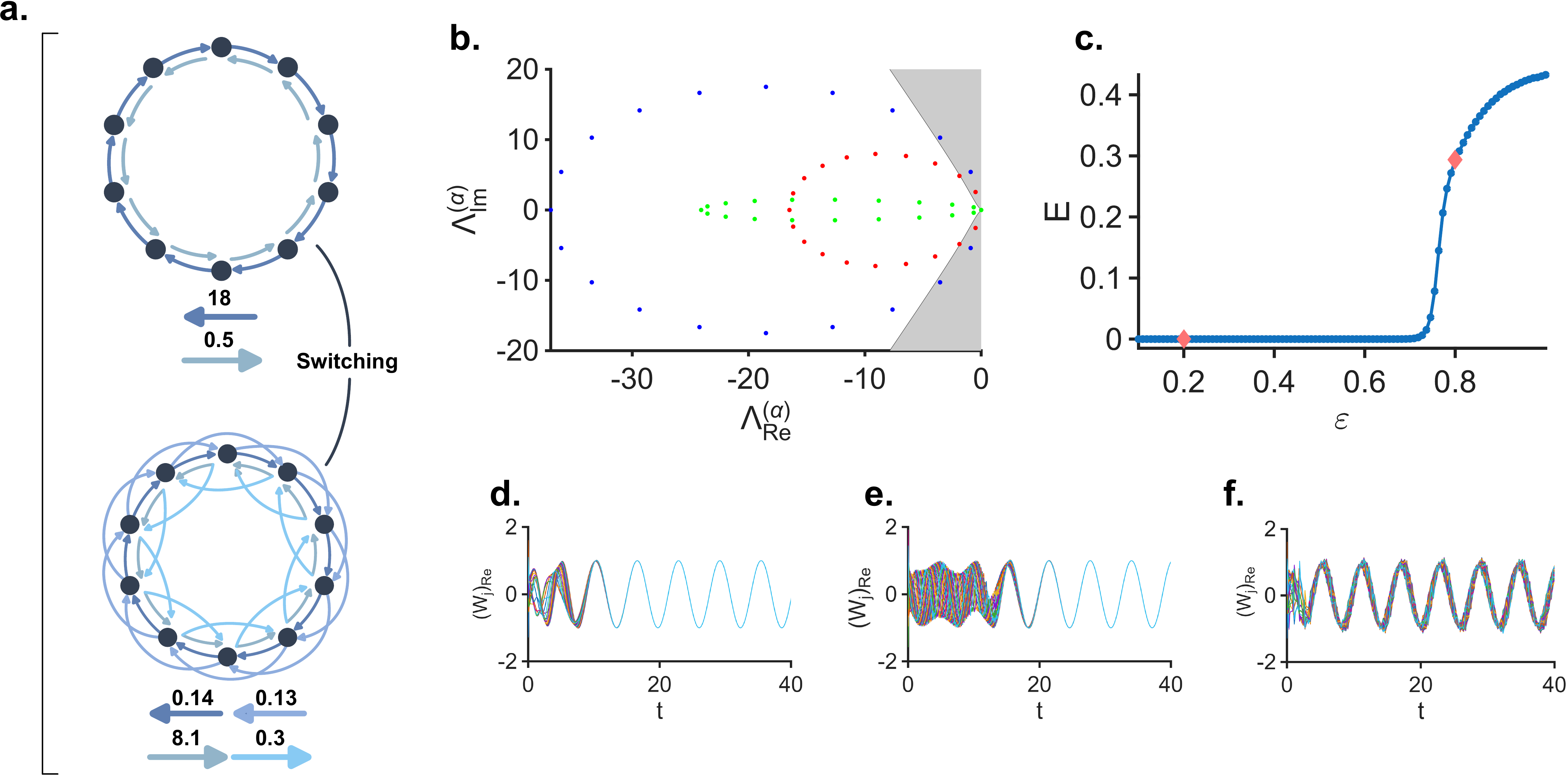}
    \caption{
\textbf{Synchronization in a time-varying balanced directed network.}
\textbf{a}, Schematic configurations with $N=10$ nodes.
The upper network has $J_1^+=\{0,1\}$, $p_1=18$, and $J_1^-=\{0,9\}$ with $q_1=0.5$.
The lower network has $J_1^+=\{0,1\}$, $J_2^+=\{0,2\}$,
$p_1=0.14$, $p_2=0.13$, and $J_1^-=\{0,9\}$, $J_2^-=\{0,8\}$ with $q_1=8.1$, and $q_2=0.3$.
Panels~\textbf{b}--\textbf{f} use the same construction with $N=20$.
\textbf{b}, Laplacian spectra of the first configuration
(blue), the second configuration (red), and their weighted
average (green), with $\zeta=0.37$.
The grey region denotes $\mathcal{M}(\Lambda)>0$;
the white region away from the boundary denotes
$\mathcal{M}(\Lambda)<0$.
\textbf{c}, Time-averaged synchronization error defined
in Eq.~\eqref{eq:syncerror}.
The highlighted points correspond to $\varepsilon=0.2$ and $0.8$.
\textbf{d}--\textbf{f}, Real parts of the oscillator states over time, respectively for the averaged network, the temporal network for
$\varepsilon=0.2$, and for $\varepsilon=0.8$.
Parameters:
$\mathbf{h}^{(1)}(w_j)=w_j^2\bar{w}_j$, $T=1$,
$\sigma=1+2i$, $\beta=1+i$, and $\mu=1-0.75i$.
}
    \label{fig2}
\end{figure*}

Let us first consider the resonant case, ${\bm \gamma}=1$. Let us consider SL oscillators coupled via an arbitrary static directed $d$-uniform hypergraph whose linearization is given by Eq.~\eqref{eq:SLLinearmodelcompactformhypo}, with a constant generalized Laplace matrix $\mathbf{L}^{(d)}$. By assuming the latter to be diagonalizable, we project the amplitude and phase perturbations onto its eigenbasis which leads 
$
\rho_j
=
\sum_{k=1}^{N}
\hat{\rho}_{k}\Psi_j^{(k)},
\text{ and }
\theta_j
=
\sum_{k=1}^{N}
\hat{\theta}_{k}\Psi_j^{(k)}.
$
Further details of this projection are provided in Appendix~\ref{App}. By substituting these expansions into Eq.~\eqref{eq:SLLinearmodelcompactformhypo} yields an independent two-dimensional system for each Laplacian mode,
\begin{equation}
\label{eq:SLLinearmodelGraphTimDep1}
\frac{d}{dt}
\begin{pmatrix}
\hat{\rho}_{k}\\
\hat{\theta}_{k}
\end{pmatrix}
=
\left(
\mathbf{J}_0
+
\Lambda_k\mathbf{J}_{\mathbf h}
\right)
\begin{pmatrix}
\hat{\rho}_{k}\\
\hat{\theta}_{k}
\end{pmatrix}
=:
\mathbf{J}_{k}
\begin{pmatrix}
\hat{\rho}_{k}\\
\hat{\theta}_{k}
\end{pmatrix},
\end{equation}
where
\begin{equation}
\label{eq:J0haut}
\mathbf J_0
=
\begin{pmatrix}
-2\sigma_{\Re} & 0\\[1mm]
-2\beta_{\Im}\dfrac{\sigma_{\Re}}{\beta_{\Re}} & 0
\end{pmatrix},
\text{ }
\mathbf J_{\mathbf h}
=
|w_\mathrm{LC}|^{\boldsymbol{\alpha}-1}
\begin{pmatrix}
\boldsymbol{\alpha}\mu_{\Re} & -\mu_{\Im}\\
\boldsymbol{\alpha}\mu_{\Im} & \mu_{\Re}
\end{pmatrix}.
\end{equation}

The Master Stability Function is obtained by replacing the discrete Laplacian eigenvalue $\Lambda_k$ in Eq.~\eqref{eq:SLLinearmodelGraphTimDep1} with a continuous complex parameter
$
\Lambda=\Lambda_{\Re}+i\Lambda_{\Im},
\text{ where }
\Lambda_{\Re}\leq 0,
\text{ and }
\Lambda_{\Im}\in\mathbb{R},
$
and by considering the one-parameter family of linear systems
\begin{equation}
\frac{d}{dt}
\begin{pmatrix}
\hat{\rho}\\
\hat{\theta}
\end{pmatrix}
=
\mathbf{J}_\Lambda
\begin{pmatrix}
\hat{\rho}\\
\hat{\theta}
\end{pmatrix},
\end{equation}
with $\mathbf{J}_\Lambda
=
\mathbf{J}_0+\Lambda\mathbf{J}_{\mathbf h}.$
The MSF is thus defined as the largest real part of the eigenvalues of $\mathbf{J}_\Lambda$,
\begin{equation}
\label{eq:MSF}
\mathcal{M}(\Lambda)
=
\max_{\nu\in\mathrm{spec}(\mathbf{J}_\Lambda)}
\Re(\nu).
\end{equation}
where $\mathrm{spec}(\mathbf{J}_\Lambda)$ denotes the spectrum of $\mathbf{J}_\Lambda$. The sign of MSF partitions the complex $\Lambda$-plane into stable and unstable regions. The synchronous solution is linearly stable if
$
\mathcal{M}\!\left(\Lambda_k\right)<0
\text{ for all transverse modes }
k=2,\ldots,N.
$
Conversely, if at least one nonzero Laplacian eigenvalue satisfies
$
\mathcal{M}\!\left(\Lambda_k\right)>0,
$
the corresponding transverse perturbation grows exponentially and the synchronized state is linearly unstable. The zero eigenvalue $\Lambda^{(1)}=0$ is associated with perturbations tangent to the synchronization manifold.

\begin{figure*}[!ht]
    \centering
    \includegraphics[width=1\linewidth]{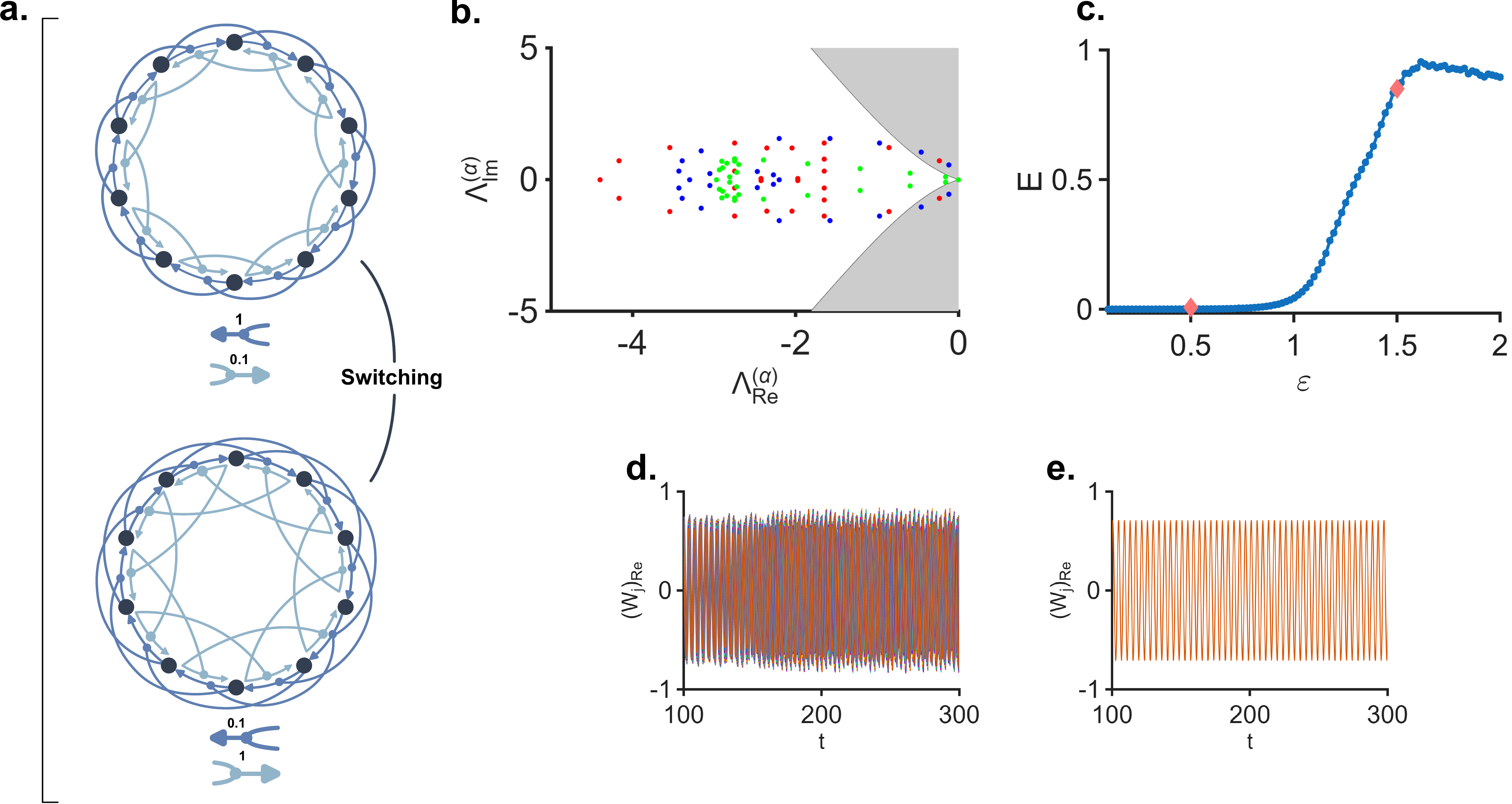}
    \caption{
\textbf{Synchronization in a time-varying balanced directed hyperring.}
\textbf{a}, Schematic configurations with $N=10$ nodes
and interactions of order $d=2$.
The upper hyperring has $J_1^+=\{0,1,2\}$ with
$p_1=1$ and $J_1^-=\{0,9,8\}$ with $q_1=0.1$.
The lower hyperring has  $J_1^+=\{0,1,3\}$
with $p_1=0.1$ and $J_1^-=\{0,9,7\}$ with $q_1=1$.
Panels~\textbf{b}--\textbf{e} use the same construction with $N=30$.
\textbf{b}, Laplacian spectra of the first configuration
(blue), the second configuration (red), and their weighted
average (green), with $\zeta=0.65$.
The grey region denotes $\mathcal{M}(\Lambda)>0$;
the white region away from the boundary denotes
$\mathcal{M}(\Lambda)<0$.
\textbf{c}, Time-averaged synchronization error defined
in Eq.~\eqref{eq:syncerror}.
The highlighted points correspond to $\varepsilon=0.7$ and $1.5$.
\textbf{d}, \textbf{e}, Real parts of the oscillator states
at $\varepsilon=0.7$ and $\varepsilon=1.5$, respectively.
Parameters:
$\mathbf{h}^{(2)}(w_{j_1},w_{j_2})
=w_{j_1}^{2}\bar{w}_{j_1}w_{j_2}\bar{w}_{j_2}$,
$T=2$, $\sigma=0.5+2i$, $\beta=1+i$, and $\mu=1-i$.
}
     \label{fig3}
\end{figure*}

Fig.~\ref{fig2} illustrates the fast-switching mechanism by using
two balanced directed circulant networks, corresponding to the
pairwise case of the construction introduced in
Subsection~\ref{subsec:spectBalhyper}.
Fig.~\ref{fig2}-\textbf{a} shows the two configurations with $N=10$ nodes
for visual clarity, whereas the numerical results in
Fig.~\ref{fig2}-(\textbf{b}--\textbf{f}) use $N=20$.
The first network is generated from $J_1^+=\{0,1\}$ and $J_1^-=\{0,9\}$, with
$p_1=18$ and $q_1=0.5$.
The second includes first- and second-neighbour interactions,
with $J_1^+=\{0,1\}$, $J_2^+=\{0,2\}$, and weights
$p_1=0.14$, $p_2=0.13$, and $J_1^-=\{0,9\}$, $J_2^-=\{0,8\}$ with $q_1=8.1$, and $q_2=0.3$.
The oscillator parameters are $\sigma=1+2i$, $\beta=1+i$,
and $\mu=1-0.75i$, and the nonlinear coupling
$\mathbf{h}^{(1)}(w_j)=w_j^2\bar{w}_j$ satisfies the resonance condition. Fig.~\ref{fig2}-\textbf{b} compares the spectra of the two instantaneous 
Laplacians with the one of their weighted average,
$\mathbf{L}^{(d)}_{\mathrm{avg}}
=\zeta\mathbf{L}^{(d)}_1+(1-\zeta)\mathbf{L}^{(d)}_2$,
with $\zeta=0.37$.
Being the latter matrices constant, we can explicitly compute the MSF; the gray region corresponds to $\mathcal{M}(\Lambda)>0$,
whereas the white region, excluding its boundary, corresponds
to $\mathcal{M}(\Lambda)<0$.
Both instantaneous spectra, shown in blue and red, exhibit
eigenvalues in the instability region.
Thus, the synchronized limit cycle is transversely unstable
on either static network.
By contrast, all transverse eigenvalues of the averaged
Laplacian, shown in green, lie in the stable region.
To assess synchronization, we integrate the temporal nonlinear dynamics for each value of $\varepsilon$ from the same initial condition close to the synchronized limit cycle and measure the dispersion of the oscillator states around their instantaneous mean,
$
m(t;\varepsilon)
=\sum_{j=1}^{N}w_j(t;\varepsilon)/N.
$
Then, the synchronization error is defined as
\begin{equation}
\label{eq:syncerror}
E(\varepsilon)
=
\left\langle
\sqrt{
\frac{1}{N|w_{\mathrm{LC}}|^2}
\sum_{j=1}^{N}
\left|
w_j(t;\varepsilon)-m(t;\varepsilon)
\right|^2
}
\right\rangle_{\mathcal{T}},
\end{equation}
where $\mathcal{T}$ is a final observation window after the
discarded transient. For Fig.~\ref{fig2}, $T=1$, so the physical switching period is
$T_{\mathrm{sw}}=\varepsilon$.
Fig.~\ref{fig2}-\textbf{c} shows an error close to zero for small
$\varepsilon$, followed by a pronounced increase around
$\varepsilon\simeq0.7$--$0.8$.
The small error in the fast-switching regime is consistent
with the local stability predicted from the averaged system.
At larger switching timescales, the nonzero error reveals
differences between the oscillator states that persist
over the observation window.
Fig.~\ref{fig2}-(\textbf{d}--\textbf{f}) complement this measurement:
the averaged network and the temporal network at
$\varepsilon=0.2$ show convergence towards a common oscillation,
whereas differences remain visible at $\varepsilon=0.8$.

Fig.~\ref{fig3} extends this analysis to balanced directed
hyperrings with interactions of order $d=2$, i.e., involving one
head node and two tail nodes per hyperedge.
The schematic configurations in Fig.~\ref{fig3}-\textbf{a} has been obtained with $N=10$ nodes,
while the numerical calculations have been realized with $N=30$ nodes.
The first hyperring is generated by $J_1^+=\{0,1,2\}$ with
$p_1=1$ and $J_1^-=\{0,9,8\}$ with $q_1=0.1$, and the second by $J_1^+=\{0,1,3\}$
with $p_1=0.1$ and $J_1^-=\{0,9,7\}$ with $q_1=1$. The nonlinear coupling is
$
\mathbf{h}^{(2)}(w_{j_1},w_{j_2})
=w_{j_1}^{2}\bar{w}_{j_1}w_{j_2}\bar{w}_{j_2}.
$ 
Its total degree is $\boldsymbol{\alpha}=5$, and its phase exponent satisfies
$(2-1)+(1-1)=1$.
Consequently, the resonance condition is guaranteed.
The remaining parameters are $\sigma=0.5+2i$, $\beta=1+i$,
and $\mu=1-i$. Fig.~\ref{fig3}-\textbf{b} compares the spectra of the two
directed hyperrings where
each instantaneous spectrum contains eigenvalues in the
gray instability region, so the synchronized limit cycle
is unstable on either static hyperring.
However, both generalized Laplacians are circulant and therefore share
the same Fourier eigenvectors.
Hence, one can explicitly compute the eigenvalues of the averaged Laplacian, and for each mode $k$ we obtain $\Lambda_{\mathrm{avg}}^k=\zeta\,\Lambda_{1}^k+(1-\zeta)\,\Lambda_{2}^k$ with $\zeta=0.65$.
For the chosen shifts and weights, these weighted
combinations place all transverse eigenvalues in the stable
region, as shown by the green points.
Thus, although each static configuration admits growing
transverse perturbations, their averaged dynamics suppresses
all transverse modes.
The fast-switching theorem then guarantees that this
transverse stability is inherited by the temporal system
for sufficiently small $\varepsilon$.

For the higher-order example we set the period to $T=2$, and hence
$T_{\mathrm{sw}}=2\varepsilon$.
Fig.~\ref{fig3}-\textbf{c} shows that the synchronization error remains
close to zero at sufficiently small switching timescales.
As $\varepsilon$ increases,
the error rises over a broader range than in Fig.~\ref{fig2},
reaching values close to $0.9$ at the upper end of the
displayed interval.
This nonzero error indicates persistent dispersion of the
complex states over the averaging window.
Conversely, the small-error regime provides numerical evidence
that sufficiently rapid switching can maintain synchronization
in the presence of directed higher-order interactions.
The real-part of the complex solutions shown in Fig.~\ref{fig3}-\textbf{d} and Fig.~\ref{fig3}-\textbf{e}
provide complementary illustrations, but synchronization is
assessed by  using Eq.~\eqref{eq:syncerror}, which accounts for both
components of the complex states.

Let us now consider the non-resonant case, namely,
$\boldsymbol{\gamma}\neq 1$. In this case,
Eq.~\eqref{eq:SLLinearmodelGraphTimDep1} reduces, for each Laplacian
eigenmode, to the time-periodic linear system
\begin{equation}
\label{eq:SLLinearmodelGraphFlo}
\frac{d}{dt}
\begin{pmatrix}
\hat{\rho}_{k}\\
\hat{\theta}_{k}
\end{pmatrix}
=
\left(
\mathbf{J}_{0}
+
\Lambda_{k}\mathbf{J}_{\mathbf h}(t)
\right)
\begin{pmatrix}
\hat{\rho}_{k}\\
\hat{\theta}_{k}
\end{pmatrix}
=:\
\mathbf{J}_{k}(t)
\begin{pmatrix}
\hat{\rho}_{k}\\
\hat{\theta}_{k}
\end{pmatrix},
\end{equation}
where
\begin{equation}
\label{eq:J0haut}
\mathbf{J}_{\mathbf h}(t)
=
|w_{\mathrm{LC}}|^{\boldsymbol{\alpha}-1}
\begin{pmatrix}
\boldsymbol{\alpha} T_{1}(t)
&
-\boldsymbol{\gamma} T_{2}(t)
\\
\boldsymbol{\alpha} T_{2}(t)
&
\boldsymbol{\gamma} T_{1}(t)
\end{pmatrix},
\end{equation}
with
$
\mathbf{J}_{\mathbf h}(t+T')
=
\mathbf{J}_{\mathbf h}(t).
$

As in the resonant case, we replace the discrete Laplacian eigenvalue
$\Lambda_k$ by a continuous complex parameter $\Lambda\in\mathbb{C}$ and
define $
\mathbf{J}_{\Lambda}(t)
=
\mathbf{J}_{0}
+
\Lambda\mathbf{J}_{\mathbf h}(t).
$
The corresponding stability region can then be determined numerically
by using Floquet theory. More precisely, for each value of $\Lambda$, we
integrate the matrix differential equation
\begin{equation}
\label{eq:systemfloquet}
\frac{d}{dt}\mathbf{M}(t)
=
\mathbf{J}_{\Lambda}(t)\mathbf{M}(t),
\end{equation}
over one period $T'$, with initial condition
$
\mathbf{M}(t_{0})=\mathbf{I}_{2}.
$
The matrix
$
\mathbf{M}(T')
$
is the corresponding monodromy matrix. Denoting its eigenvalues, i.e.,
the Floquet multipliers, by $\nu_j$, the associated Floquet exponents are
given by
\begin{equation}
\label{eq:expFloquet}
\vartheta_j
=
\frac{1}{T'}
\log(\nu_j).
\end{equation}
The Master Stability Function is therefore defined as the largest real
part among the Floquet exponents,
\begin{equation}
\mathcal{M}(\Lambda)
=
\max_j \Re(\vartheta_j)
.
\end{equation}
Hence, the instability
region in the complex $\Lambda$-plane is characterized by
$
\mathcal{M}(\Lambda)>0,
$
whereas $\mathcal{M}(\Lambda)<0$ corresponds to transverse stability.

We use again the temporal networks considered in Figs.~\ref{fig2} and~\ref{fig3}, but we now endow them with non-resonant coupling functions. For pairwise interactions, we choose
$
\mathbf{h}^{(1)}(w_j)=w_j^{5},
$ and with the parameters $\sigma=1$, $\beta=1+2i$, and $\mu=0.2(1+1.5i)$. These parameters are selected such that the averaged network is stable, whereas the two static topologies are individually unstable, as illustrated in Fig.~\ref{fig3}-\textbf{a}. The synchronization error further highlights the role of fast switching: it remains essentially zero for sufficiently small values of $\varepsilon$ and progressively increases as $\varepsilon$ becomes larger. This behavior is further illustrated in Figs.~\ref{fig3}-\textbf{c}--\textbf{d}, where the system is integrated for two different switching time scales. For $\varepsilon=0.05$, the sufficiently fast switching drives the system toward global synchronization, whereas for $\varepsilon=1.5$, the slower switching leads to desynchronization.
An analogous analysis is presented in Fig.~\ref{fig5} for higher-order interactions, where the coupling function is taken as
$
\mathbf{h}^{(2)}(w_{j_1},w_{j_2})
=
w_{j_1}^{2}\bar{w}_{j_1}
w_{j_2}$, with $T=2$, $\sigma=0.5+2i$, $\beta=1+i$, and $\mu=0.5\times(1-i)$. Once again, the numerical results demonstrate that sufficiently fast temporal switching can stabilize the synchronous state even when the individual frozen topologies do not support synchronization.

\begin{figure}[!hb]
    \centering
    \includegraphics[width=1\linewidth]{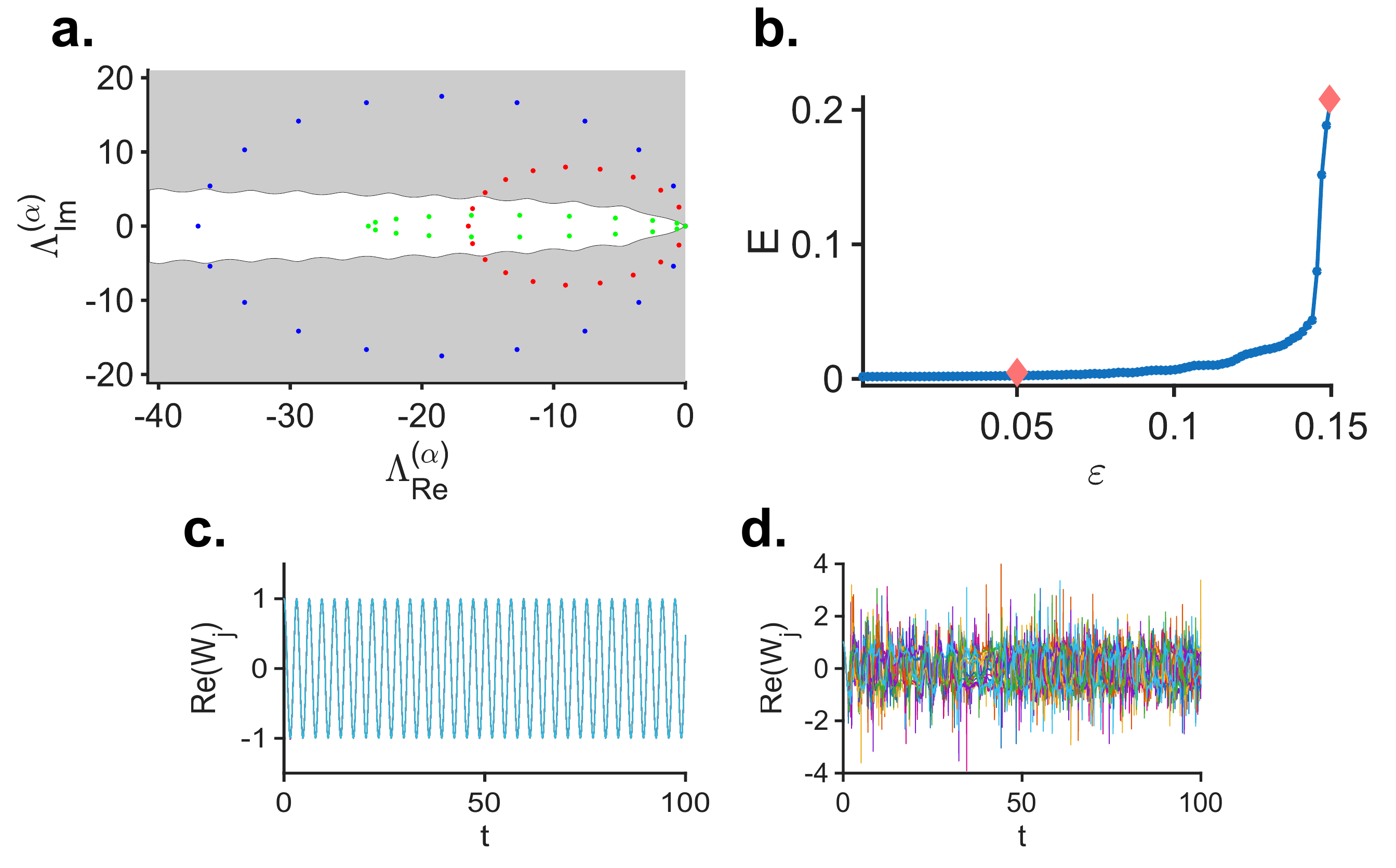}
    \caption{\textbf{Synchronization in time-varying directed SL networks with non-resonant coupling}. The temporal underlying structure is the same used in Fig.~\ref{fig2} as well as the averaged network obtained using $\zeta=0.37$. \textbf{a.} Laplacian spectra of the first configuration (blue), the second configuration (red), and their
    weighted average (green). The grey region denotes $\mathcal{M}(\Lambda)>0$;
    the white region away from the boundary denotes
    $\mathcal{M}(\Lambda)<0$.
    \textbf{c}, Time-averaged synchronization error defined
    in Eq.~\eqref{eq:syncerror}.
    The highlighted points correspond to $\varepsilon=0.05$ and $0.15$.
    \textbf{c} and \textbf{f}, Real parts of the oscillator states over time, respectively  for
    $\varepsilon=0.05$, and for $\varepsilon=0.15$.
    Parameters:
    $\mathbf{h}^{(1)}(w_j)=w_j^5$, $T=1$,
    $\sigma=1$, $\beta=1+2i$, and $\mu=0.2\times(1+1.5i)$.}
    \label{fig:4}
\end{figure}

\begin{figure}[!hb]
    \centering
    \includegraphics[width=1\linewidth]{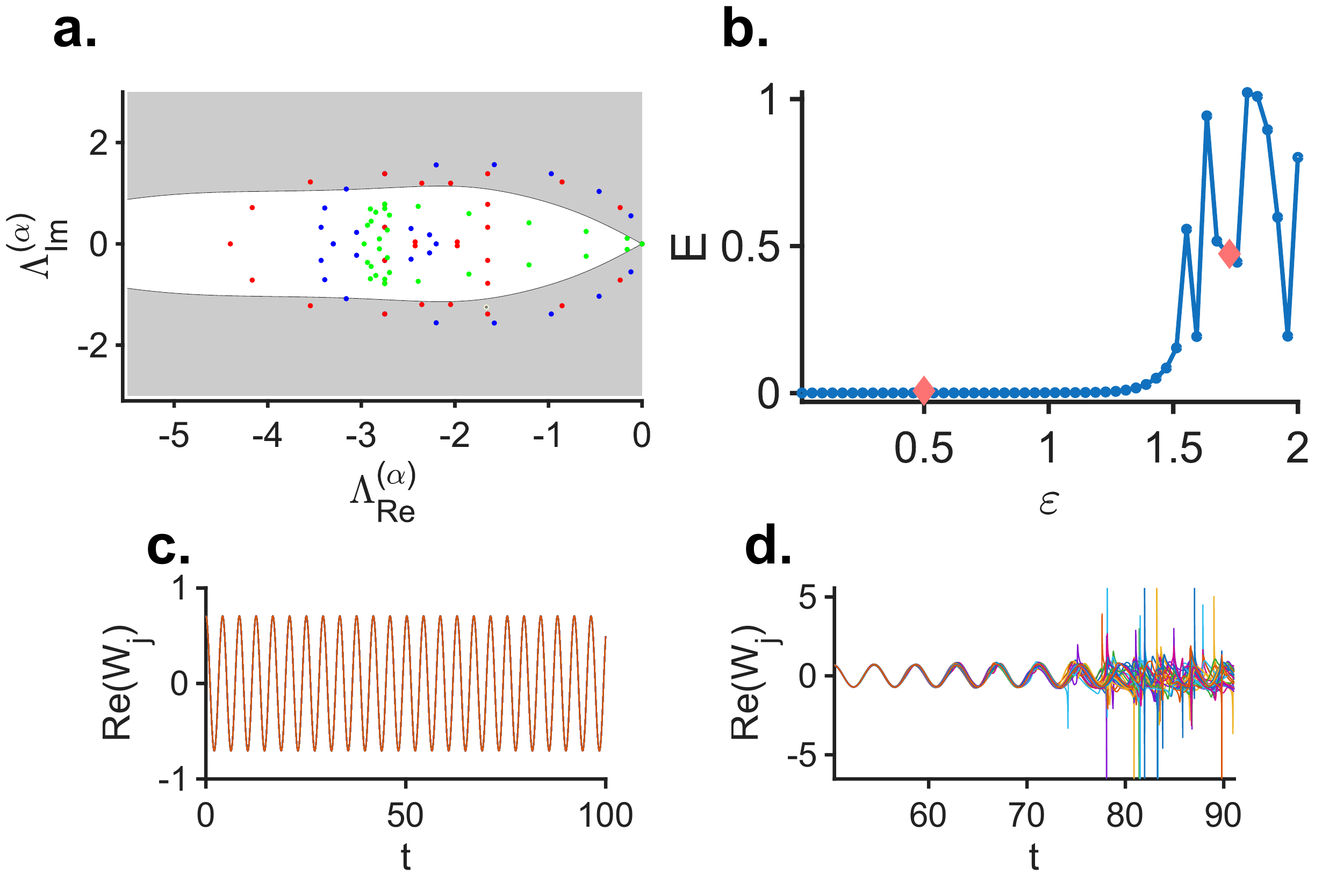}
    \caption{\textbf{Synchronization in time-varying directed SL hypergraphs with non-resonant coupling}. The temporal underlying structure is the same used in Fig.~\ref{fig3} as well as the averaged hypergraph obtained using $\zeta=0.65$. \textbf{a.} generalized Laplacian spectra of the first configuration (blue), the second configuration (red), and their
    weighted average (green). The gray region denotes $\mathcal{M}(\Lambda)>0$;
    the white region away from the boundary denotes
    $\mathcal{M}(\Lambda)<0$.
    \textbf{c}, Time-averaged synchronization error defined
    in Eq.~\eqref{eq:syncerror}.
    The highlighted points correspond to $\varepsilon=0.5$ and $\varepsilon=1.7$.
    \textbf{c} and \textbf{f}, Real parts of the oscillator states over time, respectively  for
    $\varepsilon=0.5$, and for $\varepsilon=1.7$.
    Parameters:
    $\mathbf{h}^{(2)}(w_{j_1},w_{j_2})
=w_{j_1}^{2}\bar{w}_{j_1}w_{j_2}$, $T=2$,
    $\sigma=0.5+2i$, $\beta=1+1i$, and $\mu=0.5\times(1-i)$.}
    \label{fig5}
\end{figure}

\section{Conclusion}
\label{conclusion}
In this work, we investigated global synchronization in time-varying directed higher-order structures, with particular emphasis on the analytical difficulties introduced by the combined effects of directionality, nonlinear coupling, and temporal switching. By considering directed and diffusive higher-order interactions, we have shown that the response of the synchronization manifold to infinitesimal perturbations is governed by a non-autonomous linear system evolving on two distinct time scales, with the fast time dependence carried by the generalized Laplacian matrix. Under the assumption that the underlying directed structures are balanced at all times, the same change of variables can be applied to both the linearized system of the temporal hypergraph and that of its averaged counterpart to isolate perturbations transverse to the synchronization manifold. We then established a sufficient condition for the exponential stability of the synchronization manifold in the temporal hypergraph by showing that, for sufficiently fast switching, the exponential decay of the transverse modes can be inferred from those of the corresponding averaged system. Finally, we illustrated the proposed framework by using a higher-order Stuart--Landau model on directed balanced hyperrings, showing that although directionality may hinder synchronization in the individual static configurations, sufficiently fast switching can restore a stable synchronous state.

Several questions remain open. A natural extension is to investigate temporal higher-order networks with stochastic or irregular switching, as well as regimes in which the switching is slow and the averaged description is no longer sufficient. It would also be valuable to consider continuously evolving structures, where interaction weights are smooth functions of time rather than piecewise constant quantities. Another major challenge is to obtain an analytical estimate of the threshold switching timescale $\varepsilon^*$, which separates the regime in which averaging guarantees synchronization from the regime where the full temporal dynamics must be considered. Finally, the Stuart–Landau hypergraph framework introduced here could be employed to study other collective processes driven by higher-order diffusion, including pattern formation, oscillation death, amplitude chimeras, and the onset of spatiotemporal instabilities.

\section*{Acknowledgment}
W. Segnou acknowledges financial support from the Erasmus+ Programme under Grant Agreement No. 2025-1-BE01-KA171-HED-000340208.

\section*{Competing interests}
The authors declare no competing interests.

\bibliographystyle{unsrt}

\bibliography{Biblio}

\newpage
\onecolumngrid

\appendix

\section{Proof of Lemma~\ref{eq:lemma1}}

Lemma~\ref{eq:lemma1} in the main text can be viewed as a generalization of a similar lemma proposed by Stilwell et al.~\cite{stilwell2006sufficient} and its proof is based on the same main argument. Let us hereby prove Lemma~\ref{eq:lemma1} in full detail.

Since Eq.~\eqref{eq:averagesys} is uniformly exponentially stable, there exist a symmetric matrix ${\bf Q}(t)$ as well as positive scalars $\eta$, $\rho,$ and $\mu$ such that the Lyapunov function 
\begin{equation}
    \label{eq:lyapfunct}
    v({\bm x}(t),t)={\bm x}^\top(t){\bf Q}(t){\bm x}(t)\, ,
\end{equation}
satisfies 
\begin{equation}
\begin{aligned}
\label{eq:cond}
    \eta\|{\bm x}(t)\|^2\;\;&\le  v({\bm x}(t),t)\le \rho\|{\bm x}(t)\|^2,\\
    \frac{d}{dt}\;v({\bm x}(t),t) &\le -\mu \|{\bm x}(t)\|^2\, ,
\end{aligned}
\end{equation}
for all $t$.

Our goal is to prove that, $v({\bm z}(t),t)$ is also a Lyapunov function of Eq.~\eqref{eq:tempsys} for sufficiently small value of $\varepsilon$. This claim is achieved by showing that for sufficiently small value of $\varepsilon$, 
\begin{equation}
    \label{eq:diffenergy}
    \Delta\; v({\bm z}, t+\varepsilon T,t) = v({\bm z}(t+\varepsilon T),t+\varepsilon T)-v({\bm z}(t),t)
\end{equation}
is negative definite for all $t$. From the expression of Lyapunov function defined in Eq.~\eqref{eq:lyapfunct}, we get
\begin{equation}
 \label{eq:diffenergyexpan}
\Delta\; v({\bm z}, t+\varepsilon T,t) = {\bm z}^\top(t+\varepsilon T){\bf Q}(t+\varepsilon T){\bm z}(t+\varepsilon T)-{\bm z}^\top(t){\bf Q}(t){\bm z}(t).
\end{equation}

By denoting ${\bm \Phi}_{{\bf E}}(t,t_0)$ the fundamental matrix corresponding to ${\bf A}(t) +{\bf E}(t/\varepsilon){\bf B}(t)$, i.e., 
\begin{equation}
    \label{eq:soltemporal}
    {\bm z}(t) = {\bm \Phi}_{{\bf E}}(t,t_0){\bm z}_0\, ,
\end{equation}
is a solution of Eq.~\eqref{eq:tempsys}. Then, the variation of the Lyapunov function between $t$ and $t+\varepsilon T$ can be written as
\begin{equation}
\label{eq:diffenergyexpanPhi}
\Delta\; v({\bm z}, t+\varepsilon T,t) = {\bm z}^\top(t)\left[{\bm \Phi}_{{\bf E}}^\top(t+\varepsilon T,t){\bf Q}(t+\varepsilon T){\bm \Phi}_{{\bf E}}(t+\varepsilon T,t)-{\bf Q}(t)\right]{\bm z}(t).    
\end{equation}

Let us measure the error between Eq.~\eqref{eq:averagesys} and Eq.~\eqref{eq:tempsys}. By denoting ${\bm \Phi}_{\langle \bf {E}\rangle}(t,t_0)$ the fundamental matrix of ${\bf A}(t) +\langle {\bf E}\rangle{\bf B}(t)$, we define

\begin{equation}
\label{eq:ErrorEvol}
{\bf H}(t+\varepsilon T, t) = {\bm \Phi}_{{\bf E}}(t+\varepsilon T, t)-{\bm \Phi}_{\langle \bf {E}\rangle}(t+\varepsilon T, t).    
\end{equation}

Denote ${\bf M}_\varepsilon(\sigma_\ell)={\bf A}(\sigma_\ell) +{\bf E}(\sigma_\ell/\varepsilon){\bf B}(\sigma_\ell)$. The Peano-Baker series expansion of ${\bm \Phi}_{{\bf E}}(t+\varepsilon T, t)$ gives us

\begin{equation}
\begin{aligned}
\label{eq:PeanoBakertemp}
{\bm \Phi}_{{\bf E}}(t+\varepsilon T, t) &= {\bf I}+\int_t^{t+\varepsilon T}{\bf M}_\varepsilon(\sigma_1)\; d\sigma_1 + \int_t^{t+\varepsilon T}{\bf M}_\varepsilon(\sigma_1)\; \int_t^{\sigma_1}{\bf M}_\varepsilon(\sigma_2)\;d\sigma_2d\sigma_1 \\ 
&\quad +\int_t^{t+\varepsilon T}{\bf M}_\varepsilon(\sigma_1)\; \int_t^{\sigma_1}{\bf M}_\varepsilon(\sigma_2)\int_t^{\sigma_2}{\bf M}_\varepsilon(\sigma_3)\;d\sigma_3d\sigma_2d\sigma_1+\dots
\end{aligned}
\end{equation}
and similarly for ${\bm \Phi}_{\langle \bf {E}\rangle}(t+\varepsilon T, t)$. By denoting ${\bm \Gamma}$ the difference of the first order terms of the Peano-Baker series of ${\bm \Phi}_{{\bf E}}(t+\varepsilon T, t)$ and ${\bm \Phi}_{\langle \bf {E}\rangle}(t+\varepsilon T, t)$, one obtains
\begin{equation}
\label{eq:difffirstPeanoBaker}
\begin{aligned}
{\bm \Gamma} &= \int_t^{t+\varepsilon T}{\bf M}_\varepsilon(\sigma_1)\; d\sigma_1-\int_t^{t+\varepsilon T}{\bf M}_0(\sigma_1)\; d\sigma_1= \int_t^{t+\varepsilon T}\tilde{\bf {E}}(\sigma_1/\varepsilon){\bf B(\sigma_1)}\; d\sigma_1,
\end{aligned}
\end{equation}
where ${\bf M}_0(\sigma_\ell)={\bf A}(\sigma_\ell) +\langle \bf {E}\rangle{\bf B}(\sigma_\ell)$ and $\tilde{\bf {E}}(t) = {\bf E}(t)-\langle \bf {E}\rangle$. Remark that we have ${\bf B}(\sigma_1)={\bf B}(t)+\left[{\bf B}(\sigma_1)-{\bf B}(t)\right]$. Then,
\begin{equation}
\label{eq:newGamma}
{\bm \Gamma}=\int_t^{t+\varepsilon T}\tilde{\bf {E}}(\sigma_1/\varepsilon)\left[{\bf B(\sigma_1)}-{\bf B}(t)\right]\; d\sigma_1.
\end{equation}
Furthermore, since ${\bf B}(\sigma_1)-{\bf B}(t)=\displaystyle \int_t^{\sigma_1}\frac{d{{\bf B}}}{dr}(r)dr$, we deduce
\begin{equation}
    \|{\bf B}(\sigma_1)-{\bf B}(t)\|\le L_{{\bf B}}(\sigma_1-t).
\end{equation}
Denote $M_{{\bf E}}= \sup\limits_s\|\tilde{\bf {E}}(s)\|$. Then,
\begin{equation}
\label{eq:boundupGamma}
\|{\bm \Gamma}\|\le \frac{1}{2}M_{{\bf E}}L_{{\bf B}}T^2\varepsilon^2=O(\varepsilon^2).
\end{equation}
Define
\begin{equation}
    \alpha
=
\sup_{t}\|\mathbf{A}(t)\|
+
M_{\mathbf B}
\max\left\{
\sup_{t}\|\mathbf{E}(t)\|,
\|\bar{\mathbf E}\|
\right\}.
\end{equation}
We obtain

 \begin{equation}
\begin{aligned}
\|{\bm \Phi}_{{\bf E}}(t+\varepsilon T,t)\|
&\le
1+\alpha\varepsilon T
+
\frac{(\alpha\varepsilon T)^2}{2!}
+
\frac{(\alpha\varepsilon T)^3}{3!}
+\cdots
\\
&\le
e^{\alpha\varepsilon T}.
\end{aligned}
\end{equation}
as well as $
\|{\bm \Phi}_{\langle \bf {E}\rangle}(t+\varepsilon T,t)\|
\le
e^{\alpha\varepsilon T}.
$ Then,
 
\begin{equation}
\begin{aligned}
\|{\bf H}(t+\varepsilon T, t)\|&\le \frac{1}{2}M_{{\bf E}}L_{{\bf B}}T^2\varepsilon^2 + 2(e^{\varepsilon T\alpha}-1-\varepsilon T\alpha),  \\
&\le \left(\frac{1}{2}M_{{\bf E}}L_{{\bf B}}T^2\varepsilon^2+2\alpha^2\varepsilon ^2T^2\right)+O(\varepsilon^3),
\end{aligned}
\end{equation}
so $\|{\bf H}(t+\varepsilon T, t)\|=O(\varepsilon^2)$. Denoting $R(\varepsilon)=M_{{\bf E}}L_{{\bf B}}T^2\varepsilon^2/2 + 2(e^{\varepsilon T\alpha}-1-\varepsilon T\alpha).$

Recall that the following inequalities follow from the uniformly exponential stability of Eq.~\eqref{eq:averagesys}

\begin{equation}
\begin{aligned}
\|{\bf Q}(t)\| &\le \rho, \\
\|\Phi_{\langle \bf {E}\rangle}(t,t_0)\|
&\le
\sqrt{\frac{\rho}{\eta}}\,
e^{-\frac{\mu}{2\rho}(t-t_0)},  \\
 v({\bm x}(t),t)
&\le
e^{-\frac{\mu}{\rho}(t-t_0)}
v({\bm x}(t_0),t_0). 
\end{aligned}
\end{equation}
for $t\ge t_0.$

Noting that ${\bf H}(t+\varepsilon T, t)+{\bm \Phi}_{\langle \bf {E}\rangle}(t+\varepsilon T, t)= {\bm \Phi}_{{\bf E}}(t+\varepsilon T, t)$. We get

\begin{equation}
\begin{aligned}
\Delta v({\bm z},t+\varepsilon T,t)
&=
{\bm z}^{T}(t)
\Big(
\Phi_{\langle \bf {E}\rangle}^{T}(t+\varepsilon T,t)
{\bf Q}(t+\varepsilon T)
\Phi_{\langle \bf {E}\rangle}(t+\varepsilon T,t)
-
{\bf Q}(t)
\Big)
{\bm z}(t)
\\
&\hspace{-1.5cm}+
{\bm z}^{T}(t)
\Big(
\Phi_{\langle \bf {E}\rangle}^{T}(t+\varepsilon T,t)
{\bf Q}(t+\varepsilon T)
{\bf H}(t+\varepsilon T,t)
+
{\bf H}^{T}(t+\varepsilon T,t)
{\bf Q}(t+\varepsilon T)
\Phi_{\langle \bf {E}\rangle}(t+\varepsilon T,t)
\\
&\hspace{-1.5cm}+
{\bf H}^{T}(t+\varepsilon T,t)
{\bf Q}(t+\varepsilon T)
{\bf H}(t+\varepsilon T,t)
\Big)
{\bm z}(t),
\end{aligned}
\end{equation}
and assume ${\bm x}(t)={\bm z}(t)$ at time t. We obtain

\begin{equation}
\begin{aligned}
\eta
\left(
e^{-\mu \varepsilon T/\rho}-1
\right)
\|{\bm z}(t)\|^2
&\ge{\bm z}^{T}(t)
\Big(
\Phi_{\langle \bf {E}\rangle}^{T}(t+\varepsilon T,t)
{\bf Q}(t+\varepsilon T)
\Phi_{\langle \bf {E}\rangle}(t+\varepsilon T,t)
-
{\bf Q}(t)
\Big)
{\bm z}(t)
,\\
 \rho R(\varepsilon)^2\|{\bm z}(t)\|^2&\ge{\bm z}^{T}(t)
\Big(
{\bf H}^{T}(t+\varepsilon T,t)
{\bf Q}(t+\varepsilon T)
{\bf H}(t+\varepsilon T,t)
\Big)
{\bm z}(t), 
\end{aligned}
\end{equation}
as well as

\begin{equation}
    \begin{aligned}
         2\rho
\sqrt{\frac{\rho}{\eta}}
e^{-\mu \varepsilon T/(2\rho)}
R(\varepsilon)\|{\bm z}(t)\|^2&\ge{\bm z}^{T}(t)
\Big(
\Phi_{\langle \bf {E}\rangle}^{T}(t+\varepsilon T,t)
{\bf Q}(t+\varepsilon T)
{\bf H}(t+\varepsilon T,t)
+\\
&+{\bf H}^{T}(t+\varepsilon T,t)
{\bf Q}(t+\varepsilon T)
\Phi_{\langle \bf {E}\rangle}(t+\varepsilon T,t)
\Big)
{\bm z}(t).\\
    \end{aligned}
\end{equation}
We get 
\begin{equation}
    \Delta v({\bm z},t+\varepsilon T,t)\le G(\varepsilon)\|{\bm z}(t)\|^2
\end{equation}
with 
\begin{equation}
G(\varepsilon)=
\eta
\left(
e^{-\mu \varepsilon T/\rho}-1
\right)+
2\rho
\sqrt{\rho/\eta}\;
e^{-\mu \varepsilon T/(2\rho)}
R(\varepsilon)+
\rho R(\varepsilon)^2.
\end{equation}
Since $R(\varepsilon)=O(\varepsilon^{2})$ as $\varepsilon\to 0$ and
$R(0)=0$, it follows that
\begin{equation}
G(0)=0,
\text{ and }
G'(0)
=
-\frac{\mu\eta T}{\rho}
<0.
\end{equation}
Therefore, by continuity of $G$, there exists $\varepsilon^{*}>0$ such
that
$
G(\varepsilon)<0,
\text{ for all }\,\varepsilon\in(0,\varepsilon^{*}).
$
Consequently, for every fixed $\varepsilon\in(0,\varepsilon^{*})$ and
every ${\bm z}\neq 0$,
\begin{equation}
\Delta v({\bm z},t+\varepsilon T,t)<0.
\end{equation}

Fix $\varepsilon\in(0,\varepsilon^{*})$. Since the above estimate is
uniform with respect to the initial time, there exists a constant
$\gamma>0$ such that
\begin{equation}
\begin{aligned}
\Delta v({\bm z},t_{0}+\varepsilon T,t_{0})
&=
v({\bm z}(t_{0}+\varepsilon T),t_{0}+\varepsilon T)
-
v({\bm z}(t_{0}),t_{0})
\\
&\le
-\gamma\|{\bm z}(t_{0})\|^{2},
\end{aligned}
\end{equation}
for all $t_{0}\ge 0$. Using 
$
v({\bm z}(t_{0}),t_{0})
\le
\rho\|{\bm z}(t_{0})\|^{2},
$
we obtain
\begin{equation}
-\gamma\|{\bm z}(t_{0})\|^{2}
\le
-\frac{\gamma}{\rho}
v({\bm z}(t_{0}),t_{0}),
\end{equation}
and hence
\begin{equation}
v({\bm z}(t_{0}+\varepsilon T),t_{0}+\varepsilon T)
\le
\left(
1-\frac{\gamma}{\rho}
\right)
v({\bm z}(t_{0}),t_{0}).
\end{equation}
Defining
$
q
=
1-\gamma/\rho,
\text{ and remark }
0<q<1.
$ Iterating the previous inequality yields
\begin{equation}
v({\bm z}(t_{0}+k\varepsilon T),t_{0}+k\varepsilon T)
\le
q^{k}
v({\bm z}(t_{0}),t_{0}),
\end{equation}
for every integer $k\ge 0$. Using the bounds
$
\eta\|{\bm z}\|^{2}
\le
v({\bm z},t)
\le
\rho\|{\bm z}\|^{2},
$
we further obtain
\begin{equation}
\|{\bm z}(t_{0}+k\varepsilon T)\|
\le
\sqrt{\frac{\rho}{\eta}}\,
q^{k/2}
\|{\bm z}(t_{0})\|.
\end{equation}
Since $0<q<1$, the right-hand side decays exponentially as
$k\to\infty$.

Finally, because the system matrix in Eq.~\eqref{eq:tempsys} is
uniformly bounded, the state transition matrix over each interval of
length at most $\varepsilon T$ is uniformly bounded. Therefore, the
exponential decay obtained at the discrete times
$t_{0}+k\varepsilon T$ extends to all $t\ge t_{0}$. Hence, the trivial
solution of Eq.~\eqref{eq:tempsys} is uniformly exponentially stable.

\section{Linearization around the synchronized Stuart--Landau limit cycle}
\label{App:linearization}

We consider a set of $N$ identical Stuart–Landau oscillators coupled through directed higher-order interactions
 of orders
$d=1,\ldots,D$. For a given order $d$, a directed $d$-hyperedge
consists of one node $\ell$ located at the head of the interaction and
$d$ nodes $j_1,\ldots,j_d$ located in its tail. The latter jointly
influence the head node, while no reciprocal influence is implied by
that hyperedge.

The directed interactions of order $d$ are encoded by the
time-dependent adjacency tensor
$\mathbf{A}^{(d)}(t/\varepsilon)$, whose entries
$\mathbf{A}^{(d)}_{\ell j_1\cdots j_d}(t/\varepsilon)$
measure the strength of the directed interaction from the set of tail
nodes $j_1,\ldots,j_d$ towards the head node $\ell$. We assume that
the tensor is invariant under permutations of its tail indices,
$
\mathbf{A}^{(d)}_{\ell j_1\cdots j_d}
=
\mathbf{A}^{(d)}_{\ell \pi(j_1\cdots j_d)}.
$
Allowing interactions of different orders to coexist, the dynamics of
oscillator $\ell$ is written as
\begin{equation}
\label{eq:general_SL_directed}
\frac{dw_\ell}{dt}
=
\mathbf{f}(w_\ell)
+
\sum_{d=1}^{D}
\mu^{(d)}
\sum_{j_1,\ldots,j_d=1}^{N}
\mathbf{A}^{(d)}_{\ell j_1\cdots j_d}(t/\varepsilon)
\,
\mathbf{g}^{(d)}
\left(
w_\ell,
w_{j_1},\ldots,w_{j_d}
\right),
\end{equation}
where
\begin{equation}
\mathbf{f}(w_\ell)
=
\sigma w_\ell
-
\beta |w_\ell|^2w_\ell
\end{equation}
is the local Stuart--Landau dynamics, with
$\sigma=\sigma_{\Re}+i\sigma_{\Im}$ and
$\beta=\beta_{\Re}+i\beta_{\Im}$, while
$
\mu^{(d)}
=
\mu^{(d)}_{\Re}
+
i\mu^{(d)}_{\Im}
$
denotes the complex coupling strength associated with interactions of
order $d$. Assuming that the coupling functions $\mathbf{g}^{(d)}$ are of the diffusive type, we write the general Stuart--Landau model involving directed
higher-order interactions of orders $d=1,\ldots,D$,

\begin{equation}
\label{eq:general_SL_explicit}
\begin{aligned}
\frac{dw_\ell}{dt}
={}&
\sigma w_\ell
-
\beta |w_\ell|^2w_\ell
+
\sum_{d=1}^{D}
\mu^{(d)}
\sum_{j_1,\ldots,j_d=1}^{N}
\mathbf{A}^{(d)}_{\ell j_1\cdots j_d}(t/\varepsilon)
\times
\left[
\prod_{s=1}^{d}
w_{j_s}^{a_s^{(d)}}
\bar{w}_{j_s}^{\,b_s^{(d)}}
-
w_\ell^{\,\sum_{s=1}^{d}a_s^{(d)}}
\bar{w}_\ell^{\,\sum_{s=1}^{d}b_s^{(d)}}
\right],
\end{aligned}
\end{equation}

where $\ell=1,\ldots,N$ and $a^{(d)}_s$ and $b^{(d)}_s$ are positive integers. For each interaction order $d$, we introduce
the quantities

\begin{equation}
\label{eq:alpha_gamma_d}
\boldsymbol{\alpha}^{(d)}
=
\sum_{s=1}^{d}
\left(
a_s^{(d)}+b_s^{(d)}
\right),
\qquad
\boldsymbol{\gamma}^{(d)}
=
\sum_{s=1}^{d}
\left(
a_s^{(d)}-b_s^{(d)}
\right).
\end{equation}

The quantity $\boldsymbol{\alpha}^{(d)}$ gives the total polynomial degree of the
$d$-th order interaction, whereas $\boldsymbol{\gamma}^{(d)}$ determines its effective
phase rotation.

In the supercritical regime, i.e.,
$
\sigma_{\Re}>0,
\beta_{\Re}>0,
$
an isolated Stuart--Landau oscillator possesses the stable LC
solution
$
w_{\mathrm{LC}}(t)
=
R\,e^{i\omega t},
$ where
$R
=
|w_{\mathrm{LC}}|
=
\sqrt{\sigma_{\Re}/\beta_{\Re}},
\text{ and }
\boldsymbol{\omega}
=
\sigma_{\Im}
-
\beta_{\Im}
\frac{\sigma_{\Re}}{\beta_{\Re}}.
$

We can easily show that this limit cycle is the synchronization manifold of the model Eq.~\eqref{eq:general_SL_explicit}. Indeed, assume
$
w_1(t)=\cdots=w_N(t)=w_{\mathrm{LC}}(t).
$
For every interaction order $d$ and every directed $d$-hyperedge,

\begin{align}
&
\prod_{s=1}^{d}
w_{j_s}^{a_s^{(d)}}
\bar{w}_{j_s}^{\,b_s^{(d)}}
-
w_\ell^{\sum_s a_s^{(d)}}
\bar{w}_\ell^{\,\sum_s b_s^{(d)}}
=
w_{\mathrm{LC}}^{\sum_s a_s^{(d)}}
\bar{w}_{\mathrm{LC}}^{\,\sum_s b_s^{(d)}}
-
w_{\mathrm{LC}}^{\sum_s a_s^{(d)}}
\bar{w}_{\mathrm{LC}}^{\,\sum_s b_s^{(d)}}
=0.
\end{align}

Hence all coupling terms vanish identically on the synchronization
manifold, independently of the tensors
$\mathbf{A}^{(d)}(t/\varepsilon)$, the coupling strengths $\mu^{(d)}$,
and their temporal evolution.

To investigate the linear stability of the synchronous solution, we
introduce small heterogeneous perturbations of its amplitude and phase,
$
w_\ell(t)
=
w_{\mathrm{LC}}(t)
\left(
1+\rho_\ell(t)
\right)
e^{i\theta_\ell(t)},
$
with
$
|\rho_\ell|\ll1,
|\theta_\ell|\ll1.
$. Linearizing Eq.~\eqref{eq:general_SL_explicit} in the vicinity of the limit cycle yields the following system

\begin{equation}
\label{eq:rho_theta_before_coupling}
\left\{
\begin{aligned}
\dot{\rho}_\ell
&=
-2\sigma_{\Re}\rho_\ell
+
R^{-1}
\Re
\left[
e^{-i\omega t}\mathscr{L}_\ell
\right],
\\[1mm]
\dot{\theta}_\ell
&=
-2\beta_{\Im}
\frac{\sigma_{\Re}}{\beta_{\Re}}
\rho_\ell
+
R^{-1}
\Im
\left[
e^{-i\omega t}\mathscr{L}_\ell
\right],
\end{aligned}
\right.
\end{equation}

where $\mathscr{L}_\ell$ denotes the linear contribution of the coupling term. Let's determine its expression.
one has, 
\begin{equation}
\begin{aligned}
w_{j_s}^{a_s^{(d)}}
\bar{w}_{j_s}^{\,b_s^{(d)}}
={}
R^{(a^{(d)}_s+b^{(d)}_s)}
e^{i(a^{(d)}_s-b^{(d)}_s)\omega t}
\times
\left[
1
+
(a^{(d)}_s+b^{(d)}_s)\rho_{j_s}
+
i(a^{(d)}_s-b^{(d)}_s)\theta_{j_s}
\right]
+
\mathcal{O}(2).
\end{aligned}
\end{equation}

Multiplying the $d$ contributions and neglecting all products of
perturbations gives

\begin{equation}
\label{eq:linear_hd}
\begin{aligned}
\prod_{s=1}^{d}
w_{j_s}^{a_s^{(d)}}
\bar{w}_{j_s}^{\,b_s^{(d)}}
={}&
R^{\boldsymbol{\alpha}^{(d)}}
e^{i\boldsymbol{\gamma}^{(d)}\omega t}
\times
\left[
1
+
\sum_{s=1}^{d}
(a^{(d)}_s+b^{(d)}_s)\rho_{j_s}
+
i
\sum_{s=1}^{d}
(a^{(d)}_s-b^{(d)}_s)\theta_{j_s}
\right]
+
\mathcal{O}(2).
\end{aligned}
\end{equation}

When all arguments are evaluated at the head node $\ell$, the same
calculation yields

\begin{equation}
\label{eq:linear_head_hd}
\begin{aligned}
w_\ell^{\sum_s a_s^{(d)}}
\bar{w}_\ell^{\,\sum_s b_s^{(d)}}
={}&
R^{\boldsymbol{\alpha}^{(d)}}
e^{i\boldsymbol{\gamma}^{(d)}\omega t}
\times
\left[
1+\boldsymbol{\alpha}^{(d)}\rho_\ell
+i\boldsymbol{\gamma}^{(d)}\theta_\ell
\right]
+
\mathcal{O}(2).
\end{aligned}
\end{equation}

Hence, the linearized diffusive contribution associated with a single
directed $d$-hyperedge is

\begin{equation}
\label{eq:linear_diff_d}
\begin{aligned}
&\prod_{s=1}^{d}
w_{j_s}^{a_s^{(d)}}
\bar{w}_{j_s}^{\,b_s^{(d)}}
-
w_\ell^{\sum_s a_s^{(d)}}
\bar{w}_\ell^{\,\sum_s b_s^{(d)}}
=
R^{\boldsymbol{\alpha}^{(d)}}
e^{i\boldsymbol{\gamma}^{(d)}\omega t}
\Bigg[
\sum_{s=1}^{d}
(a^{(d)}_s+b^{(d)}_s)
\left(
\rho_{j_s}-\rho_\ell
\right)
+i
\sum_{s=1}^{d}
(a^{(d)}_s-b^{(d)}_s)
\left(
\theta_{j_s}-\theta_\ell
\right)
\Bigg]
+
\mathcal{O}(2).
\end{aligned}
\end{equation}

The permutation symmetry of the tail indices of
$\mathbf{A}^{(d)}$ now plays a crucial role. Define the
contraction of the adjacency tensor by

\begin{equation}
\label{eq:tensor_contraction}
\mathcal{A}_{\ell j}^{(d)}
\left(t/\varepsilon\right)
=
\sum_{j_2,\ldots,j_d=1}^{N}
\mathbf{A}^{(d)}_{\ell j j_2\cdots j_d}
\left(t/\varepsilon\right).
\end{equation}
Because the adjacency tensor is invariant under any permutation of its
$d$ tail indices, the contraction is independent of the position
occupied by the selected node $j$. Hence, for any vector
$\mathbf{x}=(x_1,\ldots,x_N)^\top$, one has

\begin{equation}
\label{eq:tensor_reduction_identity}
\begin{aligned}
&
\sum_{j_1,\ldots,j_d}
\mathbf{A}^{(d)}_{\ell j_1\cdots j_d}
\sum_{s=1}^{d}
c_s x_{j_s}
=
\left(
\sum_{s=1}^{d}c_s
\right)
\sum_{j=1}^{N}
\mathcal{A}_{\ell j}^{(d)}x_j.
\end{aligned}
\end{equation}

Applying Eq.~\eqref{eq:tensor_reduction_identity} to the amplitude and
phase perturbations gives

\begin{align}
&
\sum_{j_1,\ldots,j_d}
\mathbf{A}^{(d)}_{\ell j_1\cdots j_d}
\sum_{s=1}^{d}
(a^{(d)}_s+b^{(d)}_s)\rho_{j_s}
=
\boldsymbol{\alpha}^{(d)}
\sum_j
\mathcal{A}_{\ell j}^{(d)}
\rho_j,
\\
&
\sum_{j_1,\ldots,j_d}
\mathbf{A}^{(d)}_{\ell j_1\cdots j_d}
\sum_{s=1}^{d}
(a^{(d)}_s-b^{(d)}_s)\theta_{j_s}
=
\boldsymbol{\gamma}^{(d)}
\sum_j
\mathcal{A}_{\ell j}^{(d)}
\theta_j.
\end{align}

We consider the generalized in-degree associated with interactions of
order $d$ as

\begin{equation}
\label{eq:kin_d}
k_{\ell}^{\mathrm{in},(d)}
=
\frac{1}{d!}
\sum_{j_1,\ldots,j_d=1}^{N}
\mathbf{A}^{(d)}_{\ell j_1\cdots j_d}.
\end{equation}

The factor $d!$ accounts for the $d!$ equivalent permutations of the
tail nodes contained in the symmetric adjacency tensor. Consequently,

\begin{equation}
\sum_{j_1,\ldots,j_d}
\mathbf{A}^{(d)}_{\ell j_1\cdots j_d}
=
d!\,
k_{\ell}^{\mathrm{in},(d)}.
\end{equation}

Similarly, we consider

\begin{equation}
\label{eq:kij_d}
k_{\ell j}^{(d)}
=
\frac{1}{(d-1)!}
\sum_{j_2,\ldots,j_d=1}^{N}
\mathbf{A}^{(d)}_{\ell j j_2\cdots j_d},
\qquad
\ell\neq j,
\end{equation}

denoting the number of directed hyperedges containing node $\ell$ in the head and $j$ in the tail. It follows from Eq.~\eqref{eq:tensor_contraction} that

\begin{equation}
\mathcal{A}_{\ell j}^{(d)}
=
(d-1)!
k_{\ell j}^{(d)},
\qquad
\ell\neq j.
\end{equation}

Assuming that the head node does not simultaneously belong to the tail
of the same hyperedge, the generalized Laplacian matrix associated with
interactions of order $d$ is therefore naturally defined by

\begin{equation}
\label{eq:Ld_tensor}
\mathbf{L}^{(d)}_{\ell j}
=
\begin{cases}
\displaystyle
\sum_{j_2,\ldots,j_d}
\mathbf{A}^{(d)}_{\ell j j_2\cdots j_d},
& \ell\neq j,
\\[4mm]
\displaystyle
-\sum_{j_1,\ldots,j_d}
\mathbf{A}^{(d)}_{\ell j_1\cdots j_d},
& \ell=j.
\end{cases}
\end{equation}

Equivalently, in terms of the generalized degrees,

\begin{equation}
\label{eq:Ld_degree}
\mathbf{L}^{(d)}_{\ell j}
=
\begin{cases}
(d-1)!\,k_{\ell j}^{(d)},
& \ell\neq j,
\\[2mm]
-d!\,k_{\ell}^{\mathrm{in},(d)},
& \ell=j.
\end{cases}
\end{equation}

By construction,
$
\sum_{j=1}^{N}
\mathbf{L}^{(d)}_{\ell j}
=0,
$ so that the uniform vector belongs to the right kernel of every
$\mathbf{L}^{(d)}$.

Using Eq.~\eqref{eq:Ld_tensor}, the contribution of interactions of
order $d$ to the linearized coupling can be rewritten as

\begin{equation}
\label{eq:Cd_linear_L}
\begin{aligned}
\mathscr{L}_\ell^{(d)}
={}&
\mu^{(d)}
R^{\boldsymbol{\alpha}^{(d)}}
e^{i\boldsymbol{\gamma}^{(d)}\omega t}
\Bigg[
\boldsymbol{\alpha}^{(d)}
\sum_{j=1}^{N}
\mathbf{L}^{(d)}_{\ell j}(t/\varepsilon)
\rho_j
+
i\boldsymbol{\gamma}^{(d)}
\sum_{j=1}^{N}
\mathbf{L}^{(d)}_{\ell j}(t/\varepsilon)
\theta_j
\Bigg].
\end{aligned}
\end{equation}

Summing over all interaction orders yields

\begin{equation}
\label{eq:Ctotal_linear}
\begin{aligned}
R^{-1}e^{-i\omega t}\mathscr{L}_\ell
=
\sum_{d=1}^{D}
&R^{\boldsymbol{\alpha}^{(d)}-1}
\mu^{(d)}
e^{i(\boldsymbol{\gamma}^{(d)}-1)\omega t}
\times
\left[
\boldsymbol{\alpha}^{(d)}
\sum_j
\mathbf{L}^{(d)}_{\ell j}
\rho_j
+
i\boldsymbol{\gamma}^{(d)}
\sum_j
\mathbf{L}^{(d)}_{\ell j}
\theta_j
\right].
\end{aligned}
\end{equation}

Let

\begin{align}
T_{1}^{(d)}(t)
&=
\mu^{(d)}_{\Re}
\cos\!\left[
(\boldsymbol{\gamma}^{(d)}-1)\omega t
\right]
-
\mu^{(d)}_{\Im}
\sin\!\left[
(\boldsymbol{\gamma}^{(d)}-1)\omega t
\right],
\\
T_{2}^{(d)}(t)
&=
\mu^{(d)}_{\Im}
\cos\!\left[
(\boldsymbol{\gamma}^{(d)}-1)\omega t
\right]
+
\mu^{(d)}_{\Re}
\sin\!\left[
(\boldsymbol{\gamma}^{(d)}-1)\omega t
\right].
\end{align}

The complete linearized amplitude--phase dynamics is consequently

\begin{equation}
\label{eq:full_linear_D}
\begin{aligned}
\begin{cases} \displaystyle\frac{d}{dt}\rho_\ell
=
-2\sigma_{\Re}\rho_\ell
+
\sum_{d=1}^{D}
R^{\boldsymbol{\alpha}^{(d)}-1}
\Bigg[
\boldsymbol{\alpha}^{(d)} T_{1}^{(d)}(t)
\sum_j
\mathbf{L}^{(d)}_{\ell j}(t/\varepsilon)\rho_j
-
\boldsymbol{\gamma}^{(d)} T_{2}^{(d)}(t)
\sum_j
\mathbf{L}^{(d)}_{\ell j}(t/\varepsilon)\theta_j
\Bigg],
\\[2mm]
\displaystyle\frac{d}{dt}\theta_\ell
=
-2\beta_{\Im}
\displaystyle\frac{\sigma_{\Re}}{\beta_{\Re}}
\rho_\ell
+
\sum_{d=1}^{D}
R^{\boldsymbol{\alpha}^{(d)}-1}
\Bigg[
\boldsymbol{\alpha}^{(d)} T_{2}^{(d)}(t)
\sum_j
\mathbf{L}^{(d)}_{\ell j}(t/\varepsilon)\rho_j
+
\boldsymbol{\gamma}^{(d)} T_{1}^{(d)}(t)
\sum_j
\mathbf{L}^{(d)}_{\ell j}(t/\varepsilon)\theta_j
\Bigg].
\end{cases}
\end{aligned}
\end{equation}






Introducing

\begin{equation}
\begin{aligned}
\mathbf{J}_0
=
\begin{pmatrix}
-2\sigma_{\Re} & 0\\[1mm]
-2\beta_{\Im}\dfrac{\sigma_{\Re}}{\beta_{\Re}} & 0
\end{pmatrix}, 
\text{ and }\mathbf{J}_{h}^{(d)}(t)
=
R^{\boldsymbol{\alpha}^{(d)}-1}
\begin{pmatrix}
\boldsymbol{\alpha}^{(d)}T^{(d)}_{1}(t) & -\boldsymbol{\gamma}^{(d)}T^{(d)}_{2}(t)\\
\boldsymbol{\alpha}^{(d)}T^{(d)}_{2}(t) & \boldsymbol{\gamma}^{(d)}T^{(d)}_{1}
\end{pmatrix}.
\end{aligned}
\end{equation}

The complete perturbation dynamics can finally be written as

\begin{equation}
\label{eq:general_compact_linear}
\frac{d\mathcal{E}}{dt}
=
\left[
\mathbf{I}_N\otimes\mathbf{J}_0
+
\sum_{d=1}^{D}
\mathbf{L}^{(d)}(t/\varepsilon)
\otimes
\mathbf{J}_{h}^{(d)}(t)
\right]
\mathcal{E}.
\end{equation}

For a fixed $d$, we thus obtain Eq.~\eqref{eq:SLLinearmodelcompactformhypo}.

\section{Derivation of the instability domain in the complex plane}
\label{App}

We provide here the analytical steps leading to the instability domain used for resonant case. Consider a static directed $d$-uniform hypergraph, so that the generalized Laplacian entering Eq.~\eqref{eq:SLLinearmodelcompactformhypo} is time independent,
\[
\mathbf{L}^{(d)}(t/\varepsilon)=\mathbf{L}^{(d)}.
\]
We assume that $\mathbf{L}^{(d)}$ is diagonalizable and denote its eigenpairs by
$
\mathbf{L}^{(d)}\Psi^{(k)}
=
\Lambda_k\Psi^{(k)},
\text{ with }
k=1,\ldots,N.
$
The amplitude and phase perturbations can then be decomposed along the corresponding right eigenvectors as

\begin{equation}
\label{eq:projeigenb}
\rho_j
=
\sum_{k=1}^{N}
\widehat{\rho}_{k}\Psi_j^{(k)},
\qquad
\theta_j
=
\sum_{k=1}^{N}
\widehat{\theta}_{k}\Psi_j^{(k)}.
\end{equation}

Inserting Eq.~\eqref{eq:projeigenb} into
Eq.~\eqref{eq:SLLinearmodelcompactformhypo} separates the perturbation dynamics into $N$ independent modal equations. For each Laplacian mode $k$, one obtains

\begin{equation}
\label{eq:SLLinearmodelGraphTimDep}
\frac{d}{dt}
\begin{pmatrix}
\widehat{\rho}_{k}\\
\widehat{\theta}_{k}
\end{pmatrix}
=
\mathbf{J}_{k}
\begin{pmatrix}
\widehat{\rho}_{k}\\
\widehat{\theta}_{k}
\end{pmatrix},
\qquad
\mathbf{J}_{k}
=
\mathbf{J}_0
+
\Lambda_k\mathbf{J}_{\mathbf h},
\end{equation}
where the local and coupling contributions are respectively given by
\begin{equation}
\label{eq:J0hautb}
\mathbf{J}_0
=
\begin{pmatrix}
-2\sigma_{\Re} & 0\\[1mm]
-2\beta_{\Im}\dfrac{\sigma_{\Re}}{\beta_{\Re}} & 0
\end{pmatrix},
\qquad
\mathbf{J}_{\mathbf h}
=
|w_{\text{LC}}|^{\boldsymbol{\alpha}-1}
\begin{pmatrix}
\boldsymbol{\alpha}\mu_{\Re} & -\mu_{\Im}\\
\boldsymbol{\alpha}\mu_{\Im} & \mu_{\Re}
\end{pmatrix}.
\end{equation}

Accordingly, the stability of the synchronized state can be determined by studying the spectrum of the $2\times2$ matrix $\mathbf{J}_{k}$ for every transverse Laplacian mode. Because the hypergraph is directed, the eigenvalues of $\mathbf{L}^{(d)}$ are not necessarily real. We therefore write
\[
\Lambda_k
=
\Lambda_{\Re}^{(k)}
+
i\Lambda_{\Im}^{(k)}.
\]

To simplify the expressions that follow, we define

\begin{equation}
\label{eq:newvar}
\begin{aligned}
\varphi
&=
(\boldsymbol{\alpha}+1)
|w_{\text{LC}}|^{\boldsymbol{\alpha}-1}
\mu_{\Re},
\\[1mm]
\gamma
&=
-2\sigma_{\Re}
|w_{\text{LC}}|^{\boldsymbol{\alpha}-1}
\left(
\mu_{\Re}
+
\frac{\beta_{\Im}}{\beta_{\Re}}\mu_{\Im}
\right),
\\[1mm]
\eta
&=
\boldsymbol{\alpha}
\left(
\mu_{\Re}^{2}+\mu_{\Im}^{2}
\right)
|w_{\text{LC}}|^{2(\boldsymbol{\alpha}-1)}.
\end{aligned}
\end{equation}

With these quantities, the real and imaginary parts of the trace of $\mathbf{J}_{k}$ become

\begin{align}
\left(
\operatorname{tr}\mathbf{J}_{k}
\right)_{\Re}
&=
-2\sigma_{\Re}
+
\varphi\Lambda_{\Re}^{(k)},
\label{eq:trace_real}
\\
\left(
\operatorname{tr}\mathbf{J}_{k}
\right)_{\Im}
&=
\varphi\Lambda_{\Im}^{(k)}.
\label{eq:trace_imag}
\end{align}

Similarly, the determinant satisfies

\begin{align}
\left(
\det\mathbf{J}_{k}
\right)_{\Re}
&=
\gamma\Lambda_{\Re}^{(k)}
+
\eta
\left[
\left(\Lambda_{\Re}^{(k)}\right)^2
-
\left(\Lambda_{\Im}^{(k)}\right)^2
\right],
\label{eq:det_real}
\\
\left(
\det\mathbf{J}_{k}
\right)_{\Im}
&=
\gamma\Lambda_{\Im}^{(k)}
+
2\eta
\Lambda_{\Re}^{(k)}
\Lambda_{\Im}^{(k)}.
\label{eq:det_imag}
\end{align}

The two growth rates associated with mode $k$ are the roots of
\[
\lambda^2
-
\left(
\operatorname{tr}\mathbf{J}_{k}
\right)\lambda
+
\det\mathbf{J}_{k}
=
0.
\]
Following the complex-eigenvalue treatment introduced in
Refs.~~\cite{asllani2014theory,di2017benjamin}, let

\begin{align}
\mathcal{A}_{k}
&=
\left[
\left(
\operatorname{tr}\mathbf{J}_{k}
\right)_{\Re}
\right]^2
-
\left[
\left(
\operatorname{tr}\mathbf{J}_{k}
\right)_{\Im}
\right]^2
-
4
\left(
\det\mathbf{J}_{k}
\right)_{\Re},
\\
\mathcal{B}_{k}
&=
2
\left(
\operatorname{tr}\mathbf{J}_{k}
\right)_{\Re}
\left(
\operatorname{tr}\mathbf{J}_{k}
\right)_{\Im}
-
4
\left(
\det\mathbf{J}_{k}
\right)_{\Im}.
\end{align}

The real and imaginary parts of the square root of
$\mathcal{A}_{k}+i\mathcal{B}_{k}$ can be written as

\begin{equation}
\varpi_{k}
=
\sqrt{
\frac{
\mathcal{A}_{k}
+
\sqrt{
\mathcal{A}_{k}^{2}
+
\mathcal{B}_{k}^{2}
}
}{2}
},
\end{equation}

and

\begin{equation}
\delta_{k}
=
\operatorname{sign}
\left(
\mathcal{B}_{k}
\right)
\sqrt{
\frac{
-\mathcal{A}_{k}
+
\sqrt{
\mathcal{A}_{k}^{2}
+
\mathcal{B}_{k}^{2}
}
}{2}
}.
\end{equation}

when $\mathcal{B}_k\neq0$.
For $\mathcal{B}_k=0$, we set $\delta_k=\sqrt{-\mathcal{A}_k}$ if $\mathcal{A}_k<0$,
and $\delta_k=0$ otherwise. Consequently, the eigenvalue of the modal Jacobian with maximal real part reads

\begin{equation}
\label{eq:eigenvalues}
\lambda_{k}
=
\frac{1}{2}
\left[
\left(
\operatorname{tr}\mathbf{J}_{k}
\right)_{\Re}
+
\varpi_{k}
\right]
+
\frac{i}{2}
\left[
\left(
\operatorname{tr}\mathbf{J}_{k}
\right)_{\Im}
+
\delta_{k}
\right].
\end{equation}

The Master Stability Function is defined as 
\begin{equation}
\label{eq:explicit_msf}
\mathcal{M}
\left(
\Lambda_k
\right)
=
\frac{1}{2}
\left[
\left(
\operatorname{tr}\mathbf{J}_{k}
\right)_{\Re}
+
\varpi_{k}
\right].
\end{equation}

The synchronized limit cycle is therefore linearly asymptotically stable whenever

\begin{equation}
\label{eq:condstab}
\mathcal{M}
\left(
\Lambda_k
\right)
<0,
\qquad
k=2,\ldots,N.
\end{equation}

The mode $k=1$, associated with
$\Lambda_1=0$, is not included in this condition because it is tangent to the synchronization manifold and contains the neutral collective phase direction. We note that the condition in Eq.~\eqref{eq:condstab} is not guaranteed when
$
\left(\operatorname{tr}\mathbf{J}_{k}\right)_{\Re}>0.
$
Therefore, a necessary condition is to impose
\begin{equation}
\mu_{\Re}>
-\frac{2\sigma_{\Re}}
{(\boldsymbol{\alpha}+1)|w_{\text{LC}}|^{\boldsymbol{\alpha}-1}\Lambda_*},
\qquad
\Lambda_*=\max_{k=2,\ldots,N}|\Re(\Lambda_k)|.
\end{equation}

An explicit description of the instability domain can be obtained by eliminating the square roots appearing in
Eq.~\eqref{eq:explicit_msf}. Let
$
x=\Lambda_{\Re},\;
y=\Lambda_{\Im},
$
and introduce the two real polynomials

\begin{equation}
\label{eq:S1S2}
S_1(x)
=
C_{12}x^2
+
C_{11}x
+
C_{10},
\qquad
S_2(x)
=
C_{24}x^4
+
C_{23}x^3
+
C_{22}x^2
+
C_{21}x.
\end{equation}

Their coefficients depend only of the Stuart--Landau and coupling parameters and are given by

\begin{align}
C_{24}
&=
\varphi^2\eta,
&
C_{12}
&=
4\eta^2-\varphi^2\eta,
\\
C_{23}
&=
\varphi^2\gamma
-
4\varphi\sigma_{\Re}\eta,
&
C_{11}
&=
4\gamma\eta
-
\varphi^2\gamma,
\\
C_{22}
&=
4\sigma_{\Re}^2\eta
-
4\varphi\sigma_{\Re}\gamma,
&
C_{10}
&=
\gamma^2
+
2\varphi\sigma_{\Re}\gamma
+
4\sigma_{\Re}^2\eta,
\\
C_{21}
&=
4\sigma_{\Re}^2\gamma.
\end{align}

After straightforward algebra, the condition
$\mathcal{M}(\Lambda)>0$ is equivalent to

\begin{equation}
\label{eq:condS2S1}
S_2(x)
<
y^2S_1(x).
\end{equation}

Equation~\eqref{eq:condS2S1} therefore identifies the instability region in the complex plane. The synchronized solution becomes unstable whenever at least one nonzero generalized Laplacian eigenvalue $\Lambda_k
=
\Lambda_{\Re}^{(k)}
+
i\Lambda_{\Im}^{(k)}$ belongs to this domain. Conversely, global synchronization is linearly asymptotically stable when the entire transverse Laplacian spectrum lies outside it.

\end{document}